\title{Simpler and Faster Contiguous Art Gallery}
\author{Sarita de Berg\thanks{Theoretical Computer Science, IT University of Copenhagen, Denmark} \and Jacobus Conradi\footnote{Theoretical Computer Science, Universität Bonn, Germany} \and Ivor van der Hoog\footnotemark[1] \and Frank Staals\footnote{Department of Computer Science, Utrecht University, the Netherlands}}
\newcommand{\candidateStarts}{\mathcal{U}_P\xspace}
\newcommand{\snapGuards}{\mathcal{G}_P\xspace}
\newcommand{\Gopt}{G^*\xspace}
\newcommand{\eps}{\ensuremath{\varepsilon}\xspace}
\def\contAG{{\textsc{Contiguous Art Gallery}}\xspace}
\newcommand{\ts}{\textsuperscript}
\date{}
\definecolor{Purple}{cmyk}{0.45,0.86,0,0}
\renewcommand{\emph}[1]{{\color{Purple}{\em #1}}}
\begin{document}

\maketitle

\begin{abstract}
The contiguous art gallery problem was introduced at SoCG'25 in a merged paper that combined three simultaneous results, each achieving a polynomial-time algorithm for the problem. This problem is a variant of the classical art gallery problem, first introduced by Klee in 1973. In the contiguous art gallery problem, we are given a polygon~$P$ and asked to determine the minimum number of guards needed, where each guard is assigned a \emph{contiguous} portion of the boundary~$\partial P$ that it can see, such that all assigned portions together cover~$\partial P$. The classical art gallery problem is NP-hard and $\exists \mathbb{R}$-complete, and the three independent works investigated whether this variant admits a polynomial-time solution. Each of these works indeed presented such a solution, with the fastest running in $O(k n^5 \log n)$ time, where $n$ denotes the number of vertices of~$P$ and $k$ is the size of a minimum guard set covering~$\partial P$. We present a solution that is both considerably simpler and significantly faster, yielding a concise and almost entirely self-contained $O(k n^2 \log^2 n)$-time algorithm.
\end{abstract}

\paragraph{Funding.}
This work was supported by the the VILLUM Foundation grant (VIL37507) ``Efficient Recomputations for Changeful Problems''.
Jacobus Conradi is funded by the iBehave Network: Sponsored by the Ministry of Culture and Science of the State of North Rhine-Westphalia and affiliated with the Lamarr Institute for Machine Learning and Artificial Intelligence.

\section{Introduction}

The art gallery problem has enjoyed much attention since its introduction in 1973~\cite{chvatal1975combinatorial}. In its classical form, it asks, given a polygon $P$, for a smallest set of points (or guards) in $P$, such that every point in $P$ is visible to some guard. A point $p$ sees a point $q$, if the line segment $\overline{p\, q}$ is contained in $P$. This problem is known to be NP-hard \cite{lee1986VertexNPhard}, and $\exists\mathbb{R}$-complete~\cite{abrahamsen2018ETRhard}.
Many variants have been considered. This ranges from restricting guards to the edges of the polygon, which was also shown to be $\exists\mathbb{R}$-complete \cite{abrahamsen2018ETRhard}, to restricting guards to the vertices of $P$, which was shown to be NP-hard \cite{lee1986VertexNPhard}. Alternatively, one can restrict the points that must be guarded. If asked to guard only the \emph{boundary} $\partial P$ of the polygon $P$, then the problem is also $\exists\mathbb{R}$-complete \cite{stade2025interiorBoundaryHard}. If the guards are also restricted to the boundary, it is at least NP-hard \cite{lee1986VertexNPhard}. In short: most variants and restrictions of the art gallery problem are hard to solve.

Recently, at SoCG'25~\cite{theEnemy}, the contiguous art gallery problem was introduced. 
In this version of the problem, a \emph{contiguous guard} $g$ is a point in $P$ that is assigned a contiguous visible portion  $[u, v]$ of  $\partial P$. 
The \contAG problem asks for a given polygon $P$ to compute a smallest set of contiguous guards where, together, they see all of $\partial P$. 
Three works show that this problem variant is polynomially solvable. Let $P$ have $n$ vertices and let the optimal solution have $k$ guards. 
Merrild, Rysgaard, Schou, and Svenning~\cite{merrild2024contiguous} show an $O(k n^5 \log n)$-time algorithm via a greedy procedure.
Starting from an arbitrary point $u$ on $\partial P$, they compute the maximal portion $[u, v]$ along $\partial P$ such that there exists a contiguous guard $(g, [u, v])$, and recurse on $v$. 
To compute such maximal portions, they show an involved data structure that computes the visibility polygons, which is the area of $P$ seen by a given point, of all vertices in $P$ and all points along $\partial P$ that their algorithm encounters.  
This procedure returns a solution of at most $k + 1$ contiguous guards.
They restart their procedure with a starting point that is not already in any of the previous solutions and after $O(kn^3)$ restarts they find solution of size $k$ (using $O(k n^5 \log n)$ total time).
Biniaz, Bose, Maheshwari, Mitchell, Odak, Polishchuk and Shermer~\cite{biniaz2024contiguous} compute $O(n^4)$ possible guard locations. From this set, they apply an intricate combinatorial argument to combine these candidate contiguous guards into the optimal solution in $O(k n^5 \log n)$ time. 
Robson, Spalding‑Jamieson, and Zheng~\cite{robson2024analytic} reduce the problem to an abstract problem that they call the Analytic Arc Cover Problem (AACP). Using results from real algebraic geometry and quantifier elimination, they show that AACP (and thereby \contAG) is polynomially solvable.

We revisit the \contAG problem.
We give a simple but strong structural theorem that defines a set of $O(n^2)$ starting locations $\candidateStarts$, such that for at least one $u \in \candidateStarts$, a greedy algorithm that starts at $u$ finds an optimal solution. 
This in itself already improves the running time of~\cite{merrild2024contiguous} by a factor $\Theta(kn)$~\cite[Theorem 3]{theEnemy}. However, we additionally show a data structure that for any point $u \in \partial P$, computes the maximal portion $[u, v]$ along $\partial P$ such that there exists a contiguous guard $(g, [u, v])$ in $O(\log^2 n)$ time.
This results in an $O(k n^2 \log^2 n)$-time algorithm, which is significantly more efficient. 
We argue that our approach is also considerably simpler, as we rely only upon basic geometric observations and techniques such as shortest paths in simple polygons, and intersection algorithms for convex polygons.
Our result is inarguably shorter, as the previous works require 48~\cite{merrild2024contiguous}, 22~\cite{biniaz2024contiguous} and 21~\cite{robson2024analytic} pages.


\section{Preliminaries}
\label[section]{sec:Preliminaries}

We denote by $P$ the area bounded by a curve that forms a simple polygon of $n$ vertices defining $n$ edges. 
We denote by $\partial P$ its boundary and we assume that the sequence of vertices of $\partial P$ is given in counterclockwise order. We see each edge as an ordered pair $\overline{u \, v}$ using this counterclockwise ordering. This means the edge is directed from $u$ to $v$.
A vertex of $P$ is a \emph{reflex vertex} if its interior angle is more than $\pi$. 
Since consecutive vertices $v_i$ and $v_{i+1}$ of $P$ are in counterclockwise order, the interior of $P$ lies immediately left of the edge from $v_i$ to $v_{i+1}$. 
With this terminology, we define three core concepts.


A \emph{chain} is a collection of edges defined by a sequence of vertices. 
For two points on the boundary $u$, $v$ we denote by $[u, v]$ the chain that is formed by traversing $\partial P$ from $u$ to $v$ counterclockwise.
We denote by $(u, v)$ the open chain, which is the set of points $(u, v) := \{  x \in [u, v] \mid x \neq u, x \neq v \}$. 
For a fixed point $u\in\partial P$, and points $v_1, v_2\in \partial P$ we say that $v_1 \leq v_2$ if $[u, v_1] \subseteq [u, v_2]$.

A \emph{contiguous guard} is a point $g\in P$ together with a chain $[u,v]$ such that any point in the chain is seen by $g$.
For a fixed point $u\in\partial P$, we frequently wish to compute a \emph{maximal} point $v\in\partial P$ such that there exists a point $g \in P$ such that $(g, [u, v])$ is a contiguous guard.
By this we mean that for all other contiguous guards $(g', [u, v'])$, $v' \leq v$ (i.e., $[u, v'] \subseteq [u, v]$).  

For any set of (directed) edges $E$, we define the \emph{visibility core} $P_E$ as the set of points $p\in \mathbb{R}^2$ that lie left of every (directed) supporting line (so $p$ lies left of, or on the line) of every edge in $E$. In particular, we can use the visibility core to decide whether a polygon can be guarded by a single contiguous guard.

\begin{observation}\label{obs:leftOfEdges}
    If $(g,[u,v])$ is a contiguous guard, then $g$ lies left of the supporting lines of all edges of $P$ intersecting the open chain $(u,v)$, and in particular in the visibility core of these edges.
\end{observation}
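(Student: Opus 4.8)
The plan is to reduce the global statement to a purely local observation at a single boundary point of each relevant edge. Fix an edge $e$ of $P$ whose intersection with the open chain $(u,v)$ is nonempty. First I would observe that $e \cap (u,v)$ contains a point $p$ lying in the \emph{relative interior} of $e$: since $(u,v)$ is a relatively open connected subarc of $\partial P$, any point of $e\cap(u,v)$ has a neighborhood on $\partial P$ contained in $(u,v)$, and such a neighborhood necessarily contains relative-interior points of $e$ (it cannot consist only of a single shared vertex). By the definition of a contiguous guard, $g$ sees $p$, so the segment $\overline{g\,p}$ is contained in $P$, and $P$ is closed.

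Next I would use that $P$ looks locally like a half-plane near $p$. Because $p$ is interior to $e$, there is a small disk $D$ around $p$ with $P\cap D = H\cap D$, where $H$ is the closed half-plane bounded by the supporting line $\ell$ of $e$ that lies to the left of the directed edge $e$; this uses the counterclockwise vertex ordering, which puts the interior of $P$ immediately left of each edge. Hence the sub-segment of $\overline{g\,p}$ near $p$ lies in $H$. I then finish with a one-line convexity argument: parametrize the segment as $x(t)=(1-t)g+tp$ and let $d$ be the signed distance to $\ell$, positive on $H$. Since $d$ is affine along the segment and $d(p)=0$, we have $d(x(t)) = (1-t)\,d(g)$; if $g$ lay strictly to the right of $\ell$ (so $d(g)<0$), then $d(x(t))<0$ for all $t\in[0,1)$, contradicting $x(t)\in H$ for $t$ near $1$. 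Therefore $d(g)\ge 0$, i.e., $g$ lies left of or on $\ell$.

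Applying this to every edge meeting $(u,v)$ shows that $g$ lies left of all of their supporting lines, which is precisely the defining condition for $g$ to lie in the visibility core of that set of edges. The only mildly delicate point is the first step, namely that the witness point $p$ can be taken in the relative interior of $e$; but this is immediate from $(u,v)$ being relatively open, so I do not anticipate a real obstacle.
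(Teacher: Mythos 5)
Your proof is correct: the reduction to a relative-interior witness point $p$ on each edge (using that $(u,v)$ is relatively open, so even an edge meeting $(u,v)$ only at a vertex contributes interior points), followed by the local half-plane picture at $p$ and the affine signed-distance argument along $\overline{g\,p}$, is exactly the immediate justification the paper has in mind when it states this as an observation without proof. There is no gap; you have simply written out in full the one-line local argument the authors treat as self-evident.
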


\begin{observation}\label{obs:trivialOneCenter}
    There exists a contiguous guard that can see all of $\partial P$ if and only if the visibility core of all edges of $P$ is not empty (which can be verified in linear time).
\end{observation}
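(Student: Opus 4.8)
The plan is to observe that the visibility core $P_E$ of the \emph{full} edge set $E$ of $P$ is exactly the \emph{kernel} of $P$, i.e., the intersection of all the closed half-planes bounded by the supporting lines of the edges and lying to their left, and to recall the classical fact that this kernel coincides with the set of points that see all of $P$ (hence all of $\partial P$). Both implications of the ``if and only if'' then reduce to short visibility arguments, and the linear-time claim follows from the Lee--Preparata kernel algorithm.

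For the forward direction, assume $(g,[u,v])$ is a contiguous guard with $[u,v]=\partial P$, so $g$ sees every point of $\partial P$. Fix an arbitrary edge $e$ of $P$ and pick a point $w$ in the relative interior of some other edge. Then $(g,[w,w])$ is again a valid contiguous guard, since every point of the chain $[w,w]=\partial P$ is seen by $g$, and $e$ is contained in the open chain $(w,w)=\partial P\setminus\{w\}$. Applying \cref{obs:leftOfEdges} to this contiguous guard shows that $g$ lies left of the supporting line of $e$. As $e$ was arbitrary, $g$ lies in $P_E$, which is therefore nonempty. (Equivalently, one argues directly that $g$ sees a point in the relative interior of $e$, and since $P$ lies locally to the left of $e$ there, the segment from $g$ to that point forces $g$ into the closed left half-plane of $e$.)

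For the reverse direction, assume $P_E\neq\emptyset$ and fix $g\in P_E$. The key is a single transversality principle: if a point $q$ in the relative interior of an edge $e$ is the limit of a sequence of points of the segment $\overline{g q}\setminus\{q\}$ that lie outside $P$, then those points lie in the open right half-plane of the supporting line of $e$, and since $\overline{gq}$ is not contained in that line it meets it only at $q$; hence all of $\overline{gq}\setminus\{q\}$, in particular $g$, lies strictly to the right of the line, contradicting $g\in P_E$. Using this with a generic interior point $p_0$ (so $\overline{gp_0}$ avoids all vertices) and $q$ the first point of $\overline{gp_0}$ in $P$ shows $g\in P$; using it with $q$ taken at the last ``re-entry'' point of $\overline{gp}$ into $P$ shows $g$ sees every $p$ lying in the relative interior of an edge; and $g$ sees every vertex by taking the limit of $\overline{gp'}$ as $p'$ approaches the vertex along an incident edge, since $P$ is closed. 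Thus $(g,\partial P)$ is a single contiguous guard that sees all of $\partial P$.

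Finally, testing $P_E\neq\emptyset$ is exactly testing whether $P$ is star-shaped, which Lee and Preparata solve in $O(n)$ time by exploiting the cyclic order in which the half-planes occur along $\partial P$ (a generic half-plane intersection would cost $O(n\log n)$). I expect the main obstacle to be the careful handling of degeneracies in the transversality step --- crossings of $\overline{gp}$ through vertices, tangential contact with $\partial P$, and the case where the critical point is a vertex rather than in the relative interior of an edge --- which can be dealt with by a genericity/perturbation argument together with the fact that the set of boundary points seen by a fixed $g$ is closed, or, if one prefers a shorter route, by simply citing the standard characterization of the kernel of a simple polygon.
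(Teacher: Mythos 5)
Your proposal is correct, but note that the paper does not actually prove this statement: it is stated as an observation, implicitly appealing to the classical fact that the visibility core of \emph{all} edges of $P$ is precisely the kernel of $P$ (the set of points seeing all of $P$), whose nonemptiness is testable in $O(n)$ time by Lee--Preparata. Your write-up supplies exactly this missing standard argument, and the substance is sound: the forward direction is the usual ``a guard seeing an interior point of an edge cannot lie strictly right of its supporting line'' argument, and the reverse direction is the usual kernel argument via a first-exit/re-entry point on $\overline{g\,p}$, with degeneracies (vertices on the segment, collinearity with an edge) handled by genericity plus closedness of the set of boundary points visible from $g$ --- or, as you say, simply by citing the known kernel characterization. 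Two small quibbles: your invocation of \cref{obs:leftOfEdges} via the guard $(g,[w,w])$ leans on interpreting $[w,w]$ as the whole boundary, which is a slight abuse of the paper's chain notation (your parenthetical direct argument is the cleaner route and suffices); and in the transversality step the collinear case $\overline{g\,q}\subset\ell_e$ is not merely excluded but in fact cannot occur, since points of the segment near $q$ would then lie on $e\subset P$, so the hypothesis of outside points approaching $q$ fails --- stating this closes that case without perturbation.
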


Henceforth, we assume that $P$ cannot be guarded with a single contiguous guard.


\section{Technical overview and contribution}
We give an algorithm for the \contAG problem with running time $O(kn^2\log^2 n)$, where $k$ is the smallest number of contiguous guards required to guard all of $\partial P$. Similar to the approach of the best known algorithm for this problem---which runs in $O(kn^5\log n)$ time \cite{theEnemy, merrild2024contiguous}---we base our algorithm on the following observation. Consider a greedy algorithm that proceeds as follows: Given start point $u\in \partial P$, compute the maximal $v\in \partial P$ such that there exists a contiguous guard $(g, [u, v])$ and add this guard to the solution set. Then recurse on $v$ until the guards cover $\partial P$. 
Invoking this greedy algorithm for an arbitrary $u \in \partial P$  yields a solution of size at most $k+1$.
The algorithm from~\cite{theEnemy, merrild2024contiguous}  roughly states that ``after $O(kn^3)$ repeated applications of the subroutine, you find a solution of size~$k$''.

We show that if instead the greedy algorithm starts from a position $u$ where there exists an optimal solution $G$ with  $(g, [u, v]) \in G$, then it outputs an optimal solution. 
We compute $O(n^2)$ candidate starting locations for $u$, and we prove that at least one of these is guaranteed to produce a solution of size~$k$. This yields an immediate runtime improvement to $O(n^4\log n)$, via the greedy subroutine of~\cite{theEnemy, merrild2024contiguous}. However, by using our candidate set of $O(n^2)$ points and a shortest path data structure, we provide a new subroutine which takes $O(\log^2n)$ time, which immediately yields an $O(k \, n^2\log^2 n)$-time algorithm.

To obtain our principal structural insight---the $O(n^2)$ candidate starting locations---we consider the arrangement $C_P$ formed by the supporting lines of edges of $P$. 
We show that any guarding solution $G$ can be transformed into a guarding solution $G'$ of equal size such that for at least one guard $(g, [u, v]) \in G'$, the point $g$ is a vertex of $C_P$. 
Furthermore, the chain $[u, v]$  is inclusion-wise maximal and intersects the two edges whose supporting lines define $g$. This yields our set of $O(n^2)$ candidate chains by computing for all vertices  $c$ of $C_P$ the inclusion-maximal chains $[u, v]$ that intersect an edge defining $c$. 

Our key observation is the fact that we can split the set of all possible guards into ``good'' and ``bad'' guards. Conceptually, a contiguous guard $(g, [u, v])$ is good, if the angle $\sphericalangle(u,g,v)$ is at most $\pi$, while for bad guards the angle $\sphericalangle(u,g,v)$ is more than $\pi$. We show that any good guard $(g, [u, v])$ can be replaced by a guard $(c, [u', v'])$ where $c$ is a vertex of $C_P$ and $[u, v] \subseteq [u', v']$. We further show that any solution with more than one guard can be transformed to contain at least one good guard. 

Given this structural insight, we provide a faster subroutine for our greedy algorithm.
In particular, we use shortest path and ray shooting data structures to compute for any point $u$, the maximal $v\in \partial P$ such that there exists a contiguous guard $(g, [u, v])$ (our structure can also output $(g, [u, v])$). 
For this data structure, our structural insight is again crucial. We prove that if $(g, [u, v])$ is a good guard then there exists a chain $[u', v']$ in our set of $O(n^2)$ candidate chains such that $u \in [u', v']$ and $v'$ is maximal (with respect to $u$). 
By storing all candidate chains (and their guards) in a sorted array, we obtain $[u', v']$ and its guard $g'$ in logarithmic time. If $(g, [u, v])$ is not in our set of candidate chains, then the fact that it must be a bad guard provides us with a lot of information about the structure of $(g, [u, v])$ which we use to compute it in $O(\log^2 n)$ time via precomputed visibility cores and data structures.




\section{Structuring guard locations}\label{sec:goodBad}

Consider all supporting lines of edges in $P$ and their arrangement $C_P$.
Our key insight shows that a contiguous guard $(g, [u, v])$ is ``well-behaved'', \emph{unless} it  meets four very specific conditions.
Formally:

\begin{definition}
    \label{def:bad}
    We say that a contiguous guard $(g, [u, v])$ is \emph{bad} if it meets all the following conditions:
    \begin{enumerate}[label=\textup{(\Roman*)}, ref=\textup{(\Roman*)}, noitemsep,nolistsep]
              \item \label{itm:first} the open chain $(u,v)$ contains a vertex of $P$, 
        \item \label{itm:second} $u \neq g$ and $v \neq g$, 
        \item \label{itm:third} the angle $\sphericalangle(v,g,u) > \pi$, and
        \item \label{itm:fourth} the shortest path from $u$ to $v$ is not a single edge. 
    \end{enumerate}
    We say the guard is \emph{good} if it is not bad.
\end{definition}

We prove that for all good guards $(g, [u, v])$ there exists a vertex $c$ of $C_P$ such that $(c, [u, v])$ is also a contiguous guard. 
Every vertex in $C_P$ is defined by two edges. 
We furthermore prove that, unless $[u, v]$ contains no vertex of $P$, $(u, v)$ contains at least one vertex of both edges defining $c$. 

\subsection{Replacing good guards.}
A contiguous guard is \emph{good} if it fails any of the four conditions of Defintion~\ref{def:bad}. 
We consider a contiguous guard $(g, [u, v])$. We denote by $E$ all edges intersected by the open chain $(u, v)$ and by $P_E$ the visibility core of $E$. 
For all $i \in \{\textup{(I)},\ldots, \textup{(IV)}\}$
we suppose that $(g, [u, v])$ is \emph{good} because it meets all conditions $j$ for $j < i$ but not condition $i$. 
We prove, in Lemmas~\ref{lem:first},~\ref{lem:second}, \ref{lem:third}, and \ref{lem:fourth}, for these four cases that there exists a contiguous guard $(c, [u, v])$ where $c$ is a vertex of $P_E$ or a vertex of $E$ (thus, $c$ is defined by the intersection of supporting lines of two edges in $E$).

\begin{lemma}[Failing \cref{itm:first}]
    \label{lem:first}
    Let $(g,[u,v])$ be a contiguous guard such that the open chain $(u,v)$ does not contain a vertex of $P$. If $e$ is the edge on which $(u, v)$ lies then either vertex of $e$ also sees $(u, v)$.
\end{lemma}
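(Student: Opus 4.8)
The plan is to show that the hypothesis ``$(u,v)$ contains no vertex of $P$'' already pins down the geometry completely, so that the statement follows from collinearity alone; the contiguous-guard hypothesis plays no role here and is stated only because this is the instance of the replacement scheme in which $(g,[u,v])$ fails condition~\cref{itm:first}.

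\emph{Step 1: the chain lies on a single edge.} I would first argue that $[u,v]$ is contained in one edge $e$ of $\partial P$. Traversing $\partial P$ counterclockwise starting at $u$, we stay on the edge incident to $u$ in the counterclockwise direction until we reach its terminal vertex; since $(u,v)$ contains no vertex of $P$, the point $v$ is reached no later than that terminal vertex. Hence $[u,v]\subseteq e$ for a unique edge $e$; let $a$ and $b$ be the two endpoints of $e$. Moreover, whether or not $u$ or $v$ happens to coincide with $a$ or $b$, the open chain $(u,v)$ lies strictly between $a$ and $b$, i.e.\ in the relative interior of the segment $e=\overline{a b}$. The only care needed here is handling these degenerate endpoint coincidences, which are harmless.

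\emph{Step 2: the endpoints of $e$ see the chain.} Fix any $p\in(u,v)$. By Step~1, $p$ lies in the relative interior of $e=\overline{a b}$, so both $\overline{a p}$ and $\overline{b p}$ are subsegments of $e$. Since $e\subseteq\partial P\subseteq P$, we get $\overline{a p}\subseteq P$ and $\overline{b p}\subseteq P$, i.e.\ both $a$ and $b$ see $p$. As $p\in(u,v)$ was arbitrary, both vertices of $e$ see all of $(u,v)$, which is the claim.

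The same computation in fact shows that $a$ (and likewise $b$) sees the whole closed chain $[u,v]$, so $(a,[u,v])$ is itself a contiguous guard; and since here the set $E$ of edges met by $(u,v)$ equals $\{e\}$, the point $a$ is a vertex of $E$, exactly as required by the first case of the replacement argument. I do not anticipate a genuine obstacle: the only thing to be spelled out carefully is Step~1 (that no interior vertex forces a single edge, including the degenerate cases), after which the rest is immediate from $\partial P\subseteq P$.
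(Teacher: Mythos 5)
Your proposal is correct and follows the same idea as the paper's (one-line) proof: since $e$ is a segment contained in $P$, each endpoint of $e$ sees every point of $e$ by collinearity. Your Step 1 merely makes explicit the fact, already implicit in the lemma statement, that the vertex-free chain lies on a single edge; this is a harmless elaboration, not a different argument.
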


\begin{proof}
    As $e$ is a line segment contained in $P$, 
    the endpoints of $e$ see all points on the edge.
\end{proof}

For the remainder, we assume that for $(g, [u, v])$, $[u, v]$ contains at least one vertex. 
Our proofs of Lemmas~\ref{lem:second}--\ref{lem:fourth} rely on the following proposition:

\begin{proposition}\label{prop:visibilityCoreVertex}
    Let $(g,[u,v])$ be a contiguous guard such that $[u,v]$ contains a vertex of $P$. Let $E$ be the edges of $P$ intersecting $(u, v)$ and $P_E$ be its visibility core. Denote by $P_g$ the polygon defined by $[u,v]$, $\overline{v \, g}$ and $\overline{g \, u}$. Then either a vertex of $P_E$ lies in $P_g$, or, a vertex of $P$ in $[u,v]$ lies in $P_E$.
\end{proposition}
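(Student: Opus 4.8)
We are given a contiguous guard $(g,[u,v])$ where $[u,v]$ contains a vertex of $P$. Let $E$ be the edges intersecting the open chain $(u,v)$, with visibility core $P_E$; let $P_g$ be the (possibly non-convex) region bounded by the chain $[u,v]$ and the two segments $\overline{v\,g}$, $\overline{g\,u}$. By Observation~\ref{obs:leftOfEdges} we know $g \in P_E$, so $P_E$ is nonempty and $g$ witnesses this. The goal is to show that either $P_E$ has a vertex inside $P_g$, or some vertex of $P$ on $[u,v]$ lies in $P_E$.

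**The plan.**

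The plan is to argue by contradiction: assume no vertex of $P_E$ lies in $P_g$, and no vertex of $P$ on $[u,v]$ lies in $P_E$, and derive a contradiction with $P_E$ being nonempty (or with $g$ seeing all of $[u,v]$). First I would establish that $P_E$ is a convex region (it is an intersection of half-planes, one per edge of $E$), and that $g \in P_E$. Next, consider the convex region $P_E$ together with the point $g$ on its boundary or interior. Since $g \in P_E \cap P_g$ (the point $g$ is a vertex of $P_g$), the intersection $P_E \cap P_g$ is nonempty. The key structural fact I want to exploit is that the boundary of $P_g$ consists of the chain $[u,v]$ (which contains at least one vertex $w$ of $P$) and the two segments from $g$. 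If $P_E$ is not entirely contained in $P_g$, then since both are "connected" near $g$, the boundary of $P_E$ must cross the boundary of $P_g$; I would track where this crossing can occur. It cannot cross the two segments $\overline{g\,u}$ and $\overline{g\,v}$ in a way that puts a $P_E$-vertex outside without first putting one inside — more carefully, if $P_E \not\subseteq P_g$ then either a vertex of $P_E$ lies in $P_g$ (done), or $P_E$ "pokes out" through the chain $[u,v]$, meaning the chain $[u,v]$ separates part of $P_E$; since the chain is a polygonal curve whose reflex behavior relative to $P_E$ is controlled by its vertices, the only way $P_E$ can be clipped by $[u,v]$ near a vertex $w \in [u,v]$ is if $w$ itself lies in $P_E$.

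**The main obstacle.**

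The delicate part will be handling the interaction between the \emph{non-convex} region $P_g$ and the convex region $P_E$, and making precise the claim that "$P_E$ can only escape $P_g$ through a vertex of $P$ on $[u,v]$." One clean way to do this: since $g \in P_E$ and $g \in \partial P_g$, walk along $\partial P_E$ starting from a point near $g$; if this walk ever leaves $P_g$, it must cross $\partial P_g$. It cannot cross the open segments $(g,u)$ or $(g,v)$, because along those segments $g$ sees the relevant boundary and any crossing point would be a point of $P_E$ on the "wrong" side — here I would use that $u$ and $v$ are the extreme points of $[u,v]$ and that $P_E$ lies left of all the supporting lines, so the segments $\overline{g\,u},\overline{g\,v}$ are "inward-facing" for $P_E$; hence if $\partial P_E$ never crosses these two segments, it must cross $[u,v]$ itself. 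At the first such crossing, a small arc of $\partial P_E$ is cut off by an edge $e \in E$ of $[u,v]$; since $P_E$ lies left of $e$'s supporting line and the chain turns at vertices, the cutting forces a vertex of $P$ on $[u,v]$ to lie in the closed region $P_E$ — this is where conditions on $E$ (that $E$ is exactly the edges meeting $(u,v)$) get used. I expect the reader/author will need the observation that $P_E$ "sees" each edge of $E$ from the correct side to rule out the crossings through the $g$-segments; formalizing that is the crux. If the contradiction hypothesis instead yields $P_E \subseteq P_g$, then every vertex of the convex polygon $P_E$ lies in $P_g$, and since $P_E$ is nonempty it has at least one vertex (being bounded — or if unbounded, one argues directly with $g$), giving the first alternative and completing the proof.
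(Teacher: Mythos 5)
Your high-level plan (study how $\partial P_E$ interacts with $P_g$, and argue that if no vertex of $P_E$ falls inside $P_g$ then a vertex of $P$ on $[u,v]$ must lie in $P_E$) is the same as the paper's, but the topological claim your argument rests on is inverted, and the crux is left unproved. You assert that $\partial P_E$ cannot cross the open segments $(g,u)$ and $(g,v)$, so that any escape of $P_E$ from $P_g$ must happen ``through the chain $[u,v]$.'' The opposite is true: since $u$ and $v$ lie \emph{on} the supporting lines of the edges $e_u,e_v\in E$ containing them, neither is interior to $P_E$, while $g\in P_E$; hence $\partial P_E$ must in fact meet both $\overline{g\,u}$ and $\overline{v\,g}$ (the paper's proof explicitly uses this). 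Conversely, $P_E$ lies weakly left of the supporting line of every edge of $E$, so it can never cross transversally through the relative interior of an edge of $(u,v)$, and at a vertex of $(u,v)$ it is confined to the wedge left of both incident supporting lines; so ``poking out through $[u,v]$'' is precisely what cannot happen. With the two cases swapped, your walk along $\partial P_E$ does not produce the crossing you need, and the subsequent step collapses.

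Moreover, the step you flag as the crux --- ``the cutting forces a vertex of $P$ on $[u,v]$ to lie in $P_E$'' --- is exactly the content of the proposition and is only asserted. Note also that your argument never actually uses the hypothesis that $[u,v]$ contains a vertex of $P$, yet the statement is false without it: if $(u,v)$ lies inside a single edge, $P_E$ is a halfplane with no vertices and $[u,v]$ contains no vertex of $P$. The paper's proof closes this in one stroke: let $C'$ be the part of $\partial P_E$ inside $P_g$ (nonempty because $\partial P_E$ meets $\overline{g\,u}$ and $\overline{v\,g}$); if $C'$ contains no vertex of $P_E$, it is a single edge of $P_E$ lying on the supporting line of some $e\in E$ with its endpoints outside $P_g$, and since $[u,v]$ contains a vertex of $P$, some endpoint of $e$ is a vertex of $P$ in $[u,v]$ that lies on this edge, hence in $P_E$. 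Your case $P_E\subseteq P_g$ is fine (boundedness of $P_g$ forces $P_E$ to have a vertex), but the main case needs the supporting-line argument above rather than the crossing dichotomy you propose.
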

\begin{proof}    
    Figure~\ref{fig:180snap} illustrates the objects in the lemma statement. Observe that $\partial P_E$ forms a polygonal chain $C$. Let $e_u$ and $e_v$ be the edges of $E$ containing $u$ and $v$, respectively. The chain $C$ lies left of the supporting lines of $e_u$ and $e_v$. As $g\in P_E$, the chain $C$ intersects the edges $\overline{g\,u}$ and $\overline{v\,g}$. Let $C'$ be the
    subchain of $C$ contained in $P_g$. 
    We claim that either a vertex of $P_E$ or a vertex of $P$ in $[u,v]$ lies on $C'$.

    For the sake of contradiction, assume that no point of $C'$ is a vertex of $P_E$. Then $C'$ is defined by a single edge of $P_E$ with its endpoints outside of $P_g$. This edge is defined by a supporting line of an edge $e \in E$ where at least one endpoint of $e$ lies in $[u,v]$. This endpoint is a vertex of $P$ in $[u,v]$, and it lies in $C'$, and thus $P_E$, concluding the proof.
\end{proof}

\begin{figure}
    \centering
    \includegraphics{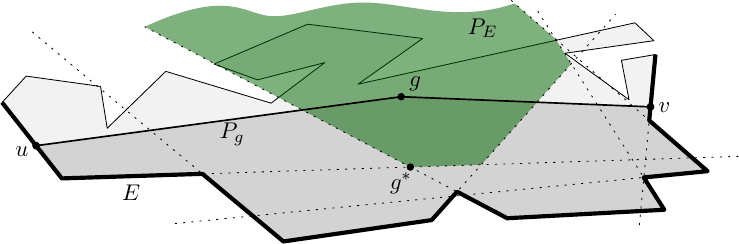}
    \caption{Illustration of the visibility core $P_E$ (in green) of the set of edge $E$ of $P$ intersecting $(u,v)$ (in black bold).
    Any vertex $g^*$ of $P_E$ in $P_g$ is a point that can see the entire dark grey polygon \vspace{-0.3cm}}
    \label{fig:180snap}
\end{figure}


\begin{lemma}[Failing \cref{itm:second}]
    \label{lem:second}
    Let $(g,[u,v])$ be a contiguous guard such that \textup{(I)} $[u,v]$ contains a vertex of $P$, but \textup{(II)} $u = g$ (the case $v = g$ is symmetric).
    Let $E$ be the edges of $P$ intersecting $(u, v)$ and $P_E$ be its visibility core. Then either a vertex of $P_E$, or a vertex of $P$ in $[u,v]$,  also sees $[u,v]$. 
\end{lemma}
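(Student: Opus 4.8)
The plan is to use that $g=u$ makes $P_g$ a star-shaped polygon, and to search for the replacement guard inside the \emph{kernel} of $P_g$ rather than merely inside $P_g$. First I would set up the geometry: since $g=u$, the segment $\overline{g\,u}$ degenerates and $\partial P_g$ is just the chain $[u,v]$ followed by the segment $\overline{v\,u}$; because $g=u$ sees every point of $[u,v]$ and is incident to $\overline{v\,u}$, it sees all of $\partial P_g$, so $P_g$ is a simple polygon contained in $P$, star-shaped from $u$, with $u$ in its kernel. Orienting $\partial P_g$ so that $P_g$ lies on its left makes it run $u\to\dots\to v$ along the chain (agreeing with the counterclockwise orientation of $\partial P$) and then back along $\overline{v\,u}$. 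Every chain edge of $P_g$ is a sub-segment of an edge of $E$ with the same direction, and every edge of $E$ arises this way, so the kernel of $P_g$---the intersection of the closed half-planes to the left of its edges---equals $K:=P_E\cap H$, where $H$ is the closed half-plane to the left of the directed segment $\overline{v\,u}$. Hence $K$ is non-empty ($u=g\in K$), and every point of $K$ sees all of $P_g$, and therefore all of $[u,v]$. It thus suffices to produce a vertex of $P_E$, or a vertex of $P$ in $[u,v]$, that lies in $K$.

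For that last step I would rerun the argument behind Proposition~\ref{prop:visibilityCoreVertex}, but with $K$ playing the role of $P_g$. The set $\partial P_E\cap H$ is a connected arc of $\partial P_E$, non-empty because $u$ lies on it ($u$ is on the supporting line of its containing edge $e_u\in E$ and lies in $P_E$). If this arc contains a vertex of $P_E$, we are done. Otherwise the arc lies on a single supporting line $\ell_e$ of an edge $e\in E$; since the line through $u$ and $v$ meets $\ell_e$ in at most one point, the arc must still reach an endpoint of the $\ell_e$-edge of $P_E$---again a vertex of $P_E$---except in a few degenerate configurations (for example when the line through $u$ and $v$ is itself a supporting line of an edge of $E$, or supports $P_E$ only at $u$). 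In each of those one checks directly that either $u$ is a vertex of $P_E$, or the endpoint of $e$ lying on $[u,v]$---a vertex of $P$---belongs to $K$.

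The step I expect to be the main obstacle is exactly this passage from $P_g$ to its kernel $K$: controlling how the half-plane $H$ (equivalently, the line through $u$ and $v$) cuts the visibility core $P_E$, and disposing of the handful of degenerate placements of that line relative to the edges of $E$. It is tempting to bypass all of this by simply invoking Proposition~\ref{prop:visibilityCoreVertex} for the guard $(u,[u,v])$ and returning the vertex it hands back, but that is insufficient: $P_g$ may straddle the line through $u$ and $v$, and then a vertex of $P_E$ lying in $P_g$ can lie outside $H$---outside the kernel---and need not see all of $[u,v]$. Everything else follows directly from Proposition~\ref{prop:visibilityCoreVertex} and the star-shapedness of $P_g$.
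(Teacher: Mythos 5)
Your route is genuinely different from the paper's, and your suspicion about the ``shortcut'' is well placed: the paper's own proof is exactly that shortcut --- it invokes Proposition~\ref{prop:visibilityCoreVertex} for $(u,[u,v])$ and then argues the returned vertex sees $[u,v]$ on the grounds that \emph{all} of $P_g$ lies left of the supporting line of $\overline{v\,u}$. As you observe, when $g=u$ the polygon $P_g$ can straddle the line through $u$ and $v$ (the chain may leave $u$ on the far side of that line while $u$ still sees everything), in which case a vertex of $P_E$ inside $P_g$ need not see $[u,v]$. So working with the kernel is the right refinement, and your first half is correct and clean: the chain edges of $P_g$ share supporting lines and orientations with the edges of $E$, so the kernel of $P_g$ is exactly $K=P_E\cap H$; it contains $u$ by Observation~\ref{obs:leftOfEdges}, and every point of $K$ sees $[u,v]$ by Observation~\ref{obs:trivialOneCenter} applied to $P_g\subseteq P$.

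The genuine gap is in the second half, where you must actually exhibit a vertex of $P_E$ (or a vertex of $P$ on $[u,v]$) inside $K$. Your walk along $\partial P_E\cap H$ tacitly assumes that the edge of $P_E$ carrying the arc terminates at a vertex of $P_E$ before leaving $H$; if $P_E$ is unbounded towards the $H$-side, that edge can be a ray and the walk never reaches a vertex --- and nothing in your sketch rules this out (it is not among the ``degenerate configurations'' you list, nor is it a measure-zero degeneracy). Likewise, in the sliver case $P_E\cap H\subseteq\partial H$ your fallback claim that an endpoint of $e$ on $[u,v]$ lies in $K$ is asserted rather than proved, and such an endpoint need not lie in $P_E$ at all. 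The missing idea is that $K$ is in fact a \emph{bounded} polygon: a recession direction $d$ of $P_E$ may not point strictly to the right of any edge of $E$, and since the chain's edge vectors sum to $v-u$, $d$ may not point strictly to the right of $v-u$; recession directions of $H$ may not point strictly to its left; hence any recession direction of $K$ is parallel to $v-u$, which forces every chain edge to be parallel to the chord --- the trivial case of a chain collinear with $\overline{u\,v}$, where a vertex of $P$ on $[u,v]$ sees the chain outright. Outside that case $K$ is a nonempty bounded convex polygon, and then the conclusion follows: an extreme point of $K$ off the chord line is a vertex of $P_E$, and if all extreme points lie on the chord line then $K$ is a bounded face of $P_E$ whose endpoints (or single point $u$) are vertices of $P_E$. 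With this supplement your plan yields a complete proof; as written, the unbounded and sliver cases are real holes.
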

\begin{proof}
    Let $P_g$ be the polygon defined by $[u,v]$ and $\overline{v\, u}$. Note that $P_g$ is a simple polygon contained in $P$ because $u = g$ can see $v$.  
    Proposition~\ref{prop:visibilityCoreVertex} implies that either $P_E$ has a vertex in $P_g$, or a vertex of $P$ in $[u,v]$ is in $P_E$. Now, as the entirety of $P_g$ lies left of the supporting line of the edge $\overline{v\, u}$, so does this vertex. Hence, Observation~\ref{obs:trivialOneCenter} implies, this vertex sees all of $\partial P_g$ and in particular $[u,v]$.
\end{proof}

\begin{lemma}[Failing \cref{itm:third}]\label{lem:third}
    Let $(g, [u, v])$ be a contiguous guard such that \textup{(I)} $(u,v)$ contains a vertex of $P$, and \textup{(II)} $u\neq g \neq v$, but \textup{(III)} the angle $\sphericalangle(v,g,u) \leq \pi$.
    Let $E$ be the edges of $P$ intersecting $(u, v)$ and $P_E$ be its visibility core. Either a vertex of $P_E$, or a vertex in $[u,v]$, also sees $[u,v]$.   
\end{lemma}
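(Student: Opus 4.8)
The plan is to follow the proof of Lemma~\ref{lem:second}, with condition~\textup{(III)} playing the role that $u=g$ played there. Let $P_g$ be the polygon bounded by the chain $[u,v]$ together with the segments $\overline{v\,g}$ and $\overline{g\,u}$. Since $g$ sees every point of $[u,v]$, the region $P_g$ is a simple polygon contained in $P$, equal to the union of the segments $\overline{g\,x}$ over $x\in[u,v]$, and hence star-shaped with $g$ in its kernel. Walking its boundary as $u\to[u,v]\to v\to g\to u$ is counterclockwise, because the interior of $P_g$, like that of $P$, lies on the left of the counterclockwise subchain $[u,v]$ of $\partial P$. Consequently the directed edges $v\to g$ and $g\to u$ have $P_g$ on their left, the interior angle of $P_g$ at $g$ equals $\sphericalangle(v,g,u)$, and condition~\textup{(III)} says precisely that this angle is at most $\pi$; so $g$ is a convex vertex and near $g$ the polygon $P_g$ coincides with the convex cone $g+\operatorname{cone}(v-g,\,u-g)$.

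Since $(u,v)\subseteq[u,v]$ contains a vertex of $P$, Proposition~\ref{prop:visibilityCoreVertex} applies and yields a point $c^*$ that is either a vertex of $P_E$ lying in $P_g$, or a vertex of $P$ on $[u,v]$; in the latter case $c^*\in[u,v]\subseteq\partial P_g\subseteq P_g$ as well, so in both cases $c^*\in P_E\cap P_g$ and $c^*$ is a vertex of $P_E$ or a vertex of $P$ on $[u,v]$. We claim $c^*$ sees all of $[u,v]$. As $P_g$ is star-shaped from $g$, the segment $\overline{g\,c^*}$ lies in $P_g$, so the direction $c^*-g$ lies in the closure of the convex cone $\operatorname{cone}(v-g,\,u-g)$. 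Both generators of this cone lie weakly to the left of the directed line through $v$ and $g$ oriented from $v$ to $g$: the vector $v-g$ is parallel to that line, and the vector $u-g$ points into the interior of $P_g$, which lies on the left of that directed edge. Hence the whole convex cone, and therefore $c^*$, lies (weakly) left of the supporting line of $\overline{v\,g}$; the symmetric argument gives that $c^*$ also lies left of the supporting line of $\overline{g\,u}$. Since $c^*\in P_E$, it also lies left of the supporting line of every edge of $P$ that meets $(u,v)$, hence of every edge of $P_g$ that is contained in $[u,v]$. Therefore $c^*$ lies left of the supporting lines of all edges of $P_g$, and Observation~\ref{obs:trivialOneCenter} applied to $P_g$ shows that $c^*$ sees all of $\partial P_g$, in particular all of $[u,v]$, which proves the lemma.

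The part we expect to require the most care is the geometric bookkeeping for $P_g$: confirming that it is a simple polygon that is star-shaped from $g$ with the stated counterclockwise orientation, and---crucially---that condition~\textup{(III)} is exactly what makes the interior cone of $P_g$ at $g$ convex, so that the segment $\overline{g\,c^*}$ inside $P_g$ cannot cross the supporting lines of the two ``artificial'' edges $\overline{v\,g}$ and $\overline{g\,u}$. This last point is what replaces the step ``$P_g$ lies left of the supporting line of $\overline{v\,u}$'' in the proof of Lemma~\ref{lem:second}; once it is in place, the remainder is a direct adaptation of that proof.
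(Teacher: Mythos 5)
Your proof is correct and follows essentially the same route as the paper: construct $P_g$ bounded by $[u,v]$, $\overline{v\,g}$, $\overline{g\,u}$, invoke Proposition~\ref{prop:visibilityCoreVertex}, use the angle condition~\textup{(III)} to place the resulting point left of the supporting lines of the two artificial edges, and finish with Observation~\ref{obs:trivialOneCenter}. The only difference is presentational: the paper simply asserts that all of $P_g$ lies left of those two supporting lines, whereas you justify this (for the single point $c^*$) via star-shapedness and the convex cone at $g$, which is a welcome elaboration of the same step.
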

\begin{proof}
    Refer to Figure~\ref{fig:180snap}.
    Let $P_g$ be the simple polygon defined by $\overline{ g \, u }$ and $\overline{ v \, g}$, and the chain $[u,v]$. Then $P_g$ is contained in $P$ because $g$ can see both $u$ and $v$. By Proposition~\ref{prop:visibilityCoreVertex}, either $P_E$ has a vertex in $P_g$, or a vertex of $P$ in $[u,v]$ is in $P_E$. As the entirety of $P_g$ lies left of the supporting lines of the edges $\overline{ g \, u }$ and $\overline{ v \, g}$, so does this vertex.  
    Hence, Observation~\ref{obs:trivialOneCenter} implies this vertex sees all of $\partial P_g$.
\end{proof}


\begin{lemma}[Failing \cref{itm:fourth}]
    \label{lem:fourth}
    Let $(g,[u,v])$ be a contiguous guard such that \textup{(I)} the open chain $(u,v)$ contains a vertex of $P$, \textup{(II)} $u\neq g\neq v$, and \textup{(III)} the angle $\sphericalangle(v,g,u)>\pi$, but \textup{(IV)} the shortest path from $u$ to $v$ in $P$ is a single edge.
 Let $E$ be the edges of $P$ intersecting $(u, v)$ and $P_E$ be its visibility core. Then either a vertex of $P_E$, or a vertex of $P$ in $[u,v]$, also sees $[u,v]$.  
\end{lemma}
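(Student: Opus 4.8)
The plan is to imitate the proof of Lemma~\ref{lem:third}. There one works with the polygon $P_g$ bounded by $[u,v]$, $\overline{vg}$ and $\overline{gu}$, using that $P_g$ lies to the left of the supporting lines of $\overline{gu}$ and $\overline{gv}$; that holds because $g$ is a \emph{convex} vertex of $P_g$, and it fails here since condition (III) makes $g$ a reflex vertex of $P_g$. The remedy is to replace the two segments $\overline{gu}$ and $\overline{gv}$ by the single chord $\overline{uv}$, which is available inside $P$ precisely because condition (IV) fails.

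Concretely, since the shortest $u$--$v$ path in $P$ is a single segment, $\overline{uv}\subseteq P$. Let $Q$ be the simple polygon bounded by the chain $[u,v]$ and the segment $\overline{vu}$; then $Q\subseteq P$, and by (I) the chain $[u,v]$ is not a single edge, so the edges of $Q$ have supporting lines exactly the supporting lines of the edges of $E$ together with the line $\ell$ through $u$ and $v$. Let $T$ be the triangle $guv$; it is non-degenerate as $g,u,v$ are not collinear by (III), its boundary lies in $P$ (the guard sees $u$ and $v$, and $\overline{uv}\subseteq P$), so $T\subseteq P$ by simple connectivity of $P$. Since $g$ is reflex in $P_g$, the triangle $T$ lies on the side of $\overline{gu}$ and $\overline{gv}$ opposite to $P_g$, hence $Q=P_g\cup T$; in particular $P_g\subseteq Q$, and $g$ (a vertex of $T$) lies on the side of $\ell$ on which $Q$ is locally situated at the chord $\overline{vu}$.

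The crux is to show that $Q$ lies (weakly) on that side of $\ell$ everywhere --- call the corresponding closed half-plane $\overline{H}$ --- which is the analogue here of the ``lies to the left of'' steps in Lemmas~\ref{lem:second} and~\ref{lem:third}. Granting this, $P_g\subseteq Q\subseteq\overline{H}$ and $[u,v]\subseteq\partial Q\subseteq\overline{H}$; moreover $g\in P_E$ by Observation~\ref{obs:leftOfEdges}, and since the supporting lines of the edges of $Q$ are those of $E$ together with $\ell$, the visibility core of the edges of $Q$ equals $P_E\cap\overline{H}$. Apply Proposition~\ref{prop:visibilityCoreVertex} to $(g,[u,v])$: either a vertex of $P_E$ lies in $P_g$, hence in $P_E\cap\overline{H}$; or a vertex of $P$ on $[u,v]$ lies in $P_E$, and being on $[u,v]\subseteq\overline{H}$ it again lies in $P_E\cap\overline{H}$. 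In both cases we obtain a point of the visibility core of the edges of $Q$, which by Observation~\ref{obs:trivialOneCenter} applied to $Q$ sees all of $\partial Q$, in particular all of $[u,v]$.

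The step I expect to be the main obstacle is establishing $Q\subseteq\overline{H}$, i.e.\ that the chain $[u,v]$ does not protrude past $\ell$. A local analysis at $u$ and $v$ --- using that $\overline{uv}\subseteq P$ forces the segment $\overline{uv}$ into the interior angle of $P$ at each of $u$ and $v$ --- controls towards which side of $\ell$ the chain emanates there, and the fact that $g\in Q$ sees the whole chain should prevent it from crossing $\ell$ in between. Should this fail (for instance when $u$ or $v$ is a reflex vertex of $P$), the fallback is to reason directly about the convex set $K:=P_E\cap\overline{H}$, which is non-empty as it contains $g$: if some vertex of $K$ is formed by two supporting lines of edges of $E$, that vertex is a vertex of $P_E$ in $K$ and we are done; otherwise all vertices of $K$ lie on $\ell$, so (as $g\in K\setminus\ell$) $K$ is unbounded and bounded by $\ell$ and at most two supporting lines of edges of $E$, and one shows that an endpoint of such a bounding edge --- a vertex of $P$ on $(u,v)$ by (I), just as in the proof of Proposition~\ref{prop:visibilityCoreVertex} --- lies in $K$, and hence sees $[u,v]$.
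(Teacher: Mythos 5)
Your construction is the same as the paper's: by condition (IV) the segment $\overline{uv}$ is a chord of $P$, you pass from $P_g$ to the polygon $Q$ bounded by $[u,v]$ and $\overline{vu}$ (the paper's $P'$), and you combine Proposition~\ref{prop:visibilityCoreVertex} with the kernel argument of Observation~\ref{obs:trivialOneCenter}, the only new half-plane being the one bounded by the line $\ell$ through $u$ and $v$. The paper dispatches the resulting side condition in one line (its identity $P_{E'}=P_E\cap P'$), and you have correctly isolated this as the delicate point; but your proposal does not close it. The claim you rest the main route on, $Q\subseteq\overline{H}$, is in fact \emph{false} in general under hypotheses (I)--(IV): if $u$ (say) is a sufficiently reflex vertex of $P$, the chain may leave $u$ on the far side of $\ell$ and $Q$ acquires a ``hook'' outside $\overline{H}$, while $g$, placed below $\ell$ with $\sphericalangle(v,g,u)>\pi$, still sees all of $[u,v]$ and $\overline{uv}\subseteq P$. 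This is exactly the situation you parenthesize (``when $u$ or $v$ is a reflex vertex''), so the entire burden falls on your fallback --- and there the decisive steps are not proven: that $K:=P_E\cap\overline{H}$ is non-empty hinges on $g\in\overline{H}$, which you justify only via $Q=P_g\cup T$ and which again needs an argument (in the hook scenario $g$ could a priori lie in $Q\setminus\overline{H}$); and in the case where no vertex of $P_E$ lies in $K$, the assertion that an endpoint of a bounding edge of $K$ ``lies in $K$'' is exactly the kind of claim Proposition~\ref{prop:visibilityCoreVertex} had to prove, and you only gesture at it.

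Two remarks that may help you finish. First, you do not need $Q\subseteq\overline{H}$ at all: the visibility core of the edges of $Q$ equals $P_E\cap\overline{H}$ \emph{unconditionally} (it is just the definition applied to $Q$'s edge set), and any point of it sees all of $\partial Q\supseteq[u,v]$ even when $Q$ protrudes past $\ell$; what you genuinely need is that this set contains a vertex of $P_E$ or a vertex of $P$ on $[u,v]$ --- note that the points delivered by Proposition~\ref{prop:visibilityCoreVertex} need not satisfy this (in the hook example a vertex of $P_E$ inside $P_g$ can sit in the hook and fail to see $[u,v]$). Second, since the lemma only asks for \emph{some} such vertex, the right move is to argue directly about $K=P_E\cap\overline{H}$ (nonemptiness, and that it contains a vertex of $P_E$ or a vertex of $P$ on $(u,v)$), i.e.\ to actually carry out your fallback; as it stands, the proposal identifies the obstacle but does not overcome it.
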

\begin{proof} 
    Let $P'$ be the polygon defined by $[u,v]$ and $\overline{vu}$, and let $P_g$ be the polygon defined by $[u,v]$, $\overline{v \, g}$ and $\overline{g \, u}$. Then $P_g\subseteq P '\subseteq P$. 
    Let $E'$ be the set of edges of $P'$ and denote by $P_{E'}$ its visibility core. By definition of the visibility core, $P_{E'}=P_E\cap P'$. 
    By Proposition~\ref{prop:visibilityCoreVertex}, either a vertex $x$ of $P_E$ is in $P_g$, or a vertex $x'$ of $P$ in $[u, v]$ is in $P_E$. 
    Since $P_g \subseteq P'$ and $P_{E'} = P_E\cap P'$, the set $(P_E \cap P_g) \subseteq P' \cap P_{E'}$. 
    Thus, by Observation~\ref{obs:trivialOneCenter},  the point $x$ (or $x'$) sees all of $P'$ and in particular $[u,v]$.   
\end{proof}

\begin{corollary}
\label{cor:good_is_good}
For any good guard $(g, [u, v])$ there exists a contiguous guard $(c, [u, v])$ where $c$ is a vertex defined by the supporting lines of edges  $e_1$ and $e_2$ of $P$.  
Moreover, unless $[u, v]$ contains no vertex of $P$, we can choose $c$ such that $[u, v]$ contains a vertex of each of $e_1$ and $e_2$. 
\end{corollary}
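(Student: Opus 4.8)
The plan is to assemble \cref{cor:good_is_good} by a straightforward case analysis that mirrors the structure already set up in \cref{sec:goodBad}, invoking the four lemmas one at a time according to which condition of \cref{def:bad} the good guard $(g,[u,v])$ fails. First I would dispose of the degenerate case where $[u,v]$ contains no vertex of $P$: by \cref{lem:first}, an endpoint of the edge $e$ carrying $(u,v)$ sees $[u,v]$, and this endpoint is a vertex of $P$, hence the intersection point of the supporting lines of the two edges incident to it. That settles the first sentence of the corollary in this case. For the main case, assume $[u,v]$ contains a vertex of $P$. Since the guard is good, it fails at least one of conditions \ref{itm:first}--\ref{itm:fourth}; let $i$ be the smallest index of a condition it fails. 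Because $i$ is smallest, the guard meets all conditions $j<i$, which is exactly the hypothesis of \cref{lem:first} (if $i$ corresponds to \ref{itm:first}, but then $(u,v)$ has no vertex of $P$, contradicting our case assumption, so actually $i\in\{\text{(II)},\text{(III)},\text{(IV)}\}$), \cref{lem:second}, \cref{lem:third}, or \cref{lem:fourth} respectively. In each of these three lemmas the conclusion is that a vertex $c$ of the visibility core $P_E$, or a vertex of $P$ lying in $[u,v]$, sees all of $[u,v]$; that is precisely a contiguous guard $(c,[u,v])$.

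Next I would argue that in every case the point $c$ can be taken to be the intersection of the supporting lines of two edges $e_1,e_2$ of $P$. If $c$ is a vertex of $P$, take $e_1,e_2$ to be its two incident edges; their supporting lines meet at $c$. If instead $c$ is a vertex of the visibility core $P_E$, then by definition of the visibility core $c$ lies on the supporting lines of (at least) two edges of $E\subseteq$ the edge set of $P$, and these two lines meet exactly at $c$ (if more than two lines pass through $c$, pick any two). Either way $c$ is the intersection of two supporting lines of edges of $P$, which is the required form; in particular $c$ is a vertex of the arrangement $C_P$.

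For the ``moreover'' clause I need that $[u,v]$ contains a vertex of each of $e_1$ and $e_2$. This is immediate when $c$ is a vertex of $P$ \emph{that lies in $[u,v]$}: then $c$ itself is a shared vertex of $e_1$ and $e_2$ and $c\in[u,v]$. When $c$ is a vertex of $P_E$, the two defining edges $e_1,e_2$ lie in $E$, i.e.\ they are intersected by the open chain $(u,v)$, so each of them contains a point of $(u,v)$; I then need to upgrade ``contains a point of $(u,v)$'' to ``contains a vertex of $P$ in $[u,v]$''. The point to exploit here is \cref{prop:visibilityCoreVertex} together with a small observation about which edges of $E$ can actually \emph{define} a vertex $c$ of $P_E$ that lies in $P_g$: the relevant chain $C'$ of $\partial P_E$ inside $P_g$ is bounded by supporting lines of edges whose endpoints fall inside $[u,v]$, so the edges in question each have an endpoint in $[u,v]$, and that endpoint is a vertex of $P$. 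I would phrase this by re-reading the proof of \cref{prop:visibilityCoreVertex}: the edge of $P_E$ containing $c$ lies on the supporting line of some $e\in E$, and since $(u,v)$ crosses $e$ transversally, $e$ has an endpoint on the $[u,v]$-side, which is a vertex of $P$ in $[u,v]$; doing this for both edges incident to $c$ in $P_E$ yields $e_1,e_2$ each with a vertex of $P$ in $[u,v]$. The main obstacle I expect is exactly this last bookkeeping step: making sure that the two supporting lines through $c$ can be chosen to be lines of edges each having an endpoint in $[u,v]$ (rather than, say, one edge contributing a line only because it happens to pass far from the chain), and handling the possibility that $c$ is a \emph{vertex of $P$} but does \emph{not} lie in $[u,v]$ — in that subcase I would fall back on the alternative conclusion of \cref{prop:visibilityCoreVertex} (a vertex of $P$ \emph{in} $[u,v]$ lies in $P_E$) to relocate $c$ onto the chain, and then use its two incident edges.
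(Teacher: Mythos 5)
Your proposal is correct and follows essentially the same route as the paper: a case analysis over the first failed condition of \cref{def:bad}, invoking \cref{lem:first,lem:second,lem:third,lem:fourth}, and then observing that the resulting point is either a vertex of $P$ in $[u,v]$ or a vertex of $P_E$, i.e.\ an intersection of two supporting lines of edges in $E$. The final bookkeeping you worry about is simpler than you make it --- once $[u,v]$ contains a vertex of $P$, \emph{every} edge intersecting the open chain $(u,v)$ has an endpoint in $[u,v]$ (it is either fully contained in $[u,v]$ or contains $u$ or $v$ in its interior), so both defining edges of a vertex of $P_E$ automatically have a vertex in $[u,v]$, and the feared subcase of a vertex of $P$ outside $[u,v]$ never arises since the lemmas only ever produce a vertex of $P$ inside $[u,v]$ or a vertex of $P_E$.
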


\begin{proof}
    The first claim follows directly from Lemmas~\ref{lem:first},~\ref{lem:second}, \ref{lem:third}, and \ref{lem:fourth}. 
    For the second claim, observe that Lemmas~\ref{lem:second}, \ref{lem:third}, and \ref{lem:fourth} places $c$ either on a vertex of $P$ within $[u, v]$, or, on a vertex of the visibility core $P_E$ where $E$ is the set of all the edges of $P$ intersected by $(u, v)$. 
\end{proof}

\subsection{Replacing bad guards.}
For all good guards $(g, [u, v])$ we can find a vertex $c$ of the arrangement $C_P$ such that $(c, [u, v])$ is a contiguous guard. 
We cannot guarantee this property for bad guards. However, we \emph{can} find an alternative contiguous guard $(c, [u, v])$ that is somewhat well-behaved. 

\begin{figure}
    \centering
    \includegraphics{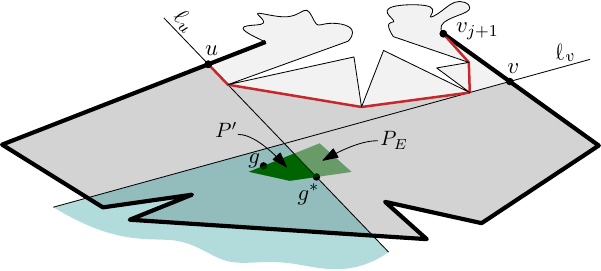}
    \caption{Illustration of proof of Lemma~\ref{lem:AlmostSnapBadGuards} with $P_v$ in gray and the shortest path from $v_{j+1}$ to $u$ in red.}
    \label{fig:badLemma}
\end{figure}

\begin{observation}[See for example visibility polygons in~\cite{Guibas1086Linear}]\label{obs:straight_is_convex}
    Let $g, u, v \in P$ such that $g$ can see $u$ and $v$ and $\sphericalangle(v,g,u) > \pi$, then the shortest path from $u$ to $v$ in $P$ is a convex left-turning chain. 
\end{observation}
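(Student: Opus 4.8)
The plan is to realise $\pi(u,v)$ as one side of the geodesic ``triangle'' with apex $g$, and to read convexity off from the hypothesis that the apex angle is reflex.

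Let $Q$ be the region bounded by the segment $\overline{g\,u}$, the shortest path $\pi(u,v)$, and the segment $\overline{v\,g}$. Since $g$ sees $u$ and $v$ we have $\overline{g\,u},\overline{v\,g}\subseteq P$, and since $\sphericalangle(v,g,u)>\pi$ the three points are not collinear; after disposing of the degenerate configurations in which $\pi(u,v)$ meets $\overline{g\,u}$ or $\overline{v\,g}$ away from $u,v$, or runs through $g$ (which forces $g$ to be a reflex vertex of $P$) --- these can be dealt with directly or by a limiting argument --- the curve $\partial Q$ is a simple polygon whose vertices are $u$, $v$, $g$, and the interior vertices $p_1,\dots,p_{m-1}$ of $\pi(u,v)$. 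The first point to establish is that $Q\subseteq P$: the boundary $\partial Q$ lies in $P$, and if $\partial P$ reached into the interior of $Q$ it could only do so through some $p_i$, since it cannot cross $\overline{g\,u}$ or $\overline{v\,g}$ (interior to $P$) nor the edges of $\pi(u,v)$; but the taut-string property of $\pi(u,v)$ at $p_i$ keeps $\partial P$ out of $Q$ there, so $Q\subseteq P$.

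Granting $Q\subseteq P$, each $p_i$ is a reflex vertex of $P$ at which, by optimality of $\pi(u,v)$, the corner cannot be cut while staying in $P$, hence not while staying in $Q$; therefore $Q$ lies on the reflex side of the bend of $\pi(u,v)$ at $p_i$, i.e.\ every $p_i$ is a reflex vertex of $Q$. The hypothesis $\sphericalangle(v,g,u)>\pi$ fixes the orientation of $Q$: traversing $\partial Q$ so that $\pi(u,v)$ is read from $u$ to $v$ is the clockwise direction (equivalently, $g$ is a convex vertex of $Q$, so that $Q$ is a pseudotriangle with corners $u,v,g$). This step uses that $\pi(u,v)$ has length at most $|\overline{g\,u}|+|\overline{v\,g}|$, so it cannot wind around $g$ and must subtend at $g$ the non-reflex of the two wedges bounded by the rays from $g$ through $u$ and through $v$. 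Along a clockwise traversal of a simple polygon one turns left at every reflex vertex; hence, reading $\pi(u,v)$ from $u$ to $v$, we turn left at each $p_i$. As all of the internal turns of $\pi(u,v)$ are left turns, $\pi(u,v)$ is a convex, left-turning chain; the case in which $\pi(u,v)$ is a single segment is vacuously of this form.

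I expect the genuine work to lie in the containment $Q\subseteq P$ and in the orientation bookkeeping that extracts the handedness of the turns from $\sphericalangle(v,g,u)>\pi$; the remainder is just the taut-string characterisation of geodesics in a simple polygon, for which one may point to the funnel analysis of~\cite{Guibas1086Linear}.
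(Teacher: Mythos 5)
The paper never actually proves \cref{obs:straight_is_convex}: it is stated as a known fact about geodesics and visibility in simple polygons and delegated to the literature (\cite{Guibas1086Linear}). Your write-up is therefore a genuinely self-contained alternative, and its structure is sound: the geodesic triangle $Q$ with corners $u,v,g$ lies in $P$ (its boundary is a closed curve in the simply connected region $P$); local optimality of $\pi(u,v)$ at each interior vertex $p_i$ forces $Q$ to lie on the reflex side of the bend, so every $p_i$ is a reflex vertex of $Q$; and the hypothesis $\sphericalangle(v,g,u)>\pi$ fixes the orientation, so that reading $\pi(u,v)$ from $u$ to $v$ traverses $\partial Q$ clockwise and every reflex vertex is a left turn. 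This is precisely the funnel/pseudotriangle picture the paper's citation points to, made explicit, and your clockwise conclusion is the one consistent with the paper's implicit convention that $\sphericalangle(v,g,u)$ is the interior angle at $g$ of the region bounded by the counterclockwise chain $[u,v]$ and the segments $\overline{v\,g}$, $\overline{g\,u}$.

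Two justifications should be tightened, though neither changes the argument. First, you derive ``$Q$ subtends the non-reflex wedge at $g$'' from the length bound $|\pi(u,v)|\le|\overline{g\,u}|+|\overline{v\,g}|$; as stated this does not obviously exclude the reflex wedge. You do not need it: once every interior vertex of $\pi(u,v)$ is reflex in $Q$, the interior-angle sum of the simple polygon $Q$ (which is $(j+1)\pi$ when the path has $j\ge 1$ interior vertices) leaves strictly less than $\pi$ for the three corners $u,v,g$ combined, so $g$ is automatically a convex corner of $Q$; the hypothesis $\sphericalangle(v,g,u)>\pi$ is then used only to decide which wedge at $g$ the interior occupies, i.e.\ to force the clockwise traversal and hence left turns (the case $j=0$ being vacuous, as you note). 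Second, the degenerate configurations (the geodesic running along $\overline{g\,u}$ or $\overline{v\,g}$, or passing through $g$, which forces a reflex vertex of $P$ on those segments) are only waved at; they are easy, but should be dispatched explicitly rather than ``by a limiting argument''.
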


\begin{lemma}\label{lem:AlmostSnapBadGuards}
    Let $(g, [u, v])$ be a contiguous bad guard. 
    Furthermore, let $v$ lie on edge $e = \overline{v_j \, v_{j+1}}$ of $P$. 
    Consider the shortest path from $v_{j+1}$ to $u$ and let $\ell_u$ be the supporting line of the last edge on this path. 
    If $E$ is the set of edges of $P$ intersecting $(u, v)$ and $P_E$ is its visibility core, then either a vertex of $P_E$, or, the first or the last point along $\ell_u$ in $P_E$,  also sees $[u,v]$.    
\end{lemma}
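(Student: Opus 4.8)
The plan is to follow the template of Lemmas~\ref{lem:second}--\ref{lem:fourth}: exhibit a simple (or weakly simple) sub‑polygon $Q\subseteq P$ whose boundary contains $[u,v]$, argue that the visibility core of the edges of $Q$ is nonempty, and then pick a conveniently located vertex of that core, which by \cref{obs:trivialOneCenter} sees all of $\partial Q$ and hence all of $[u,v]$. A new argument is needed rather than reusing \cref{lem:fourth} precisely because of condition~\cref{itm:fourth}: since the shortest path from $u$ to $v$ is not a single edge, $u$ does not see $v$, so we cannot close $[u,v]$ off with the single segment $\overline{v\,u}$; instead we close it off with a shortest path, and the last edge of that path is what produces the line $\ell_u$.

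I would first record the structure forced by badness. By \cref{itm:third} and \cref{obs:straight_is_convex} the shortest path from $u$ to $v$ is a convex chain, and by \cref{itm:first,itm:fourth} it has at least two edges. Write $\sigma=(v_{j+1}=x_0,x_1,\dots,x_m=u)$ for the shortest path from $v_{j+1}$ to $u$, so $x_{m-1}=w$ and $\ell_u$ is the line through $w$ and $u$; each $x_i$ with $1\le i\le m-1$ is a reflex vertex of $P$, and since $\sigma$ is pulled taut against the chain $[u,v_{j+1}]$, each of its bend vertices $x_1,\dots,x_{m-1}$ lies on the reflex part of $[u,v_{j+1}]$, which is contained in the open chain $(u,v)$ (the only portion of $[u,v_{j+1}]$ not in $[u,v]$ is a straight subsegment of $e$). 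Let $Q$ be the pocket of $P$ cut off by $\sigma$, i.e.\ the (weakly simple) polygon bounded by the boundary chain $[u,v_{j+1}]$ and $\sigma$. Then $Q\subseteq P$, $[u,v]\subseteq[u,v_{j+1}]\subseteq\partial Q$, and the edges of $Q$ are the edges of $E$ (these are exactly the edges of $\partial P$ meeting $(u,v_{j+1})$, and their supporting lines are those defining $P_E$) together with the $m$ edges of $\sigma$. Hence the visibility core of $Q$ equals $P_E\cap\bigcap_{i=0}^{m-1}H_i$, where $H_i$ is the closed half‑plane to the left of the edge $\overline{x_i x_{i+1}}$ directed from $x_i$ to $x_{i+1}$, and $H_{m-1}$ is bounded by $\ell_u$.

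The argument then rests on two claims. First, the visibility core of $Q$ is nonempty: comparing $\sigma$ with the path $v_{j+1}\to v\to g\to u$ (legal in $P$, as $\overline{v_{j+1}v}\subseteq e$ and $\overline{v\,g},\overline{g\,u}\subseteq P$), the taut path $\sigma$ cuts off a pocket contained in $P_g$ (up to the degenerate subsegment $\overline{v\,v_{j+1}}$), and $g$ lies in the visibility core of $P_g$ (it is left of all edges of $E$ by \cref{obs:leftOfEdges} and lies on the supporting lines of $\overline{v\,g}$ and $\overline{g\,u}$), so $g$ sees all of $P_g\supseteq Q$; \cref{obs:trivialOneCenter} then gives nonemptiness. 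Second, inside $P_E$ the half‑planes $H_1,\dots,H_{m-2}$ are redundant: for an interior bend $x_i$ both of its incident $\partial P$‑edges lie in $E$ (since $x_i\in(u,v)$), so their left half‑planes contain $P_E$; and because $\sigma$ hugs the reflex vertex $x_i$, the outgoing edge $\overline{x_i x_{i+1}}$ is angularly between those two edges at $x_i$, forcing $H_i\supseteq P_E$. Combining the two claims, the visibility core of $Q$ equals $P_E\cap H_0\cap H_{m-1}$, a nonempty convex polygon each of whose vertices is a vertex of $P_E$, or a point where the line bounding $H_0$ or the line $\ell_u$ crosses $\partial P_E$; a final argument shows that $H_0$ contributes no vertex we need — this is where using the polygon vertex $v_{j+1}$ rather than $v$ matters — leaving a vertex $c$ that is either a vertex of $P_E$ or the first or last point of $\ell_u$ in $P_E$, and by construction $c$ sees $[u,v]$.

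I expect the main obstacle to be this last pair of points: the ``angularly between'' step is clean only at a bend both of whose incident $\partial P$‑edges lie in $E$, which is exactly where the endpoint $v_{j+1}$ of $\sigma$ behaves differently, so disposing of $H_0$ must exploit that $v_{j+1}$ is a genuine vertex of $P$ with one incident edge $e\in E$, together with a careful local analysis of how $\sigma$ leaves $v_{j+1}$ relative to $e$ (and the degenerate case in which $u$ already sees $v_{j+1}$ has to be checked separately). A secondary subtlety is that $\sigma$ may touch $[u,v]$ at its bend vertices, so $Q$ is only weakly simple, and the nonemptiness argument together with the use of \cref{obs:trivialOneCenter} must be carried out on the simple pieces of $Q$ (or after an infinitesimal perturbation).
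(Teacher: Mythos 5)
Your overall plan (close $[u,v]$ off with a taut path, take the visibility core of the resulting pocket, and locate a vertex of it that by \cref{obs:trivialOneCenter} sees everything) is the right one, but the execution has a genuine gap at exactly the point your argument leans on most. You assert that every interior bend vertex $x_i$ of $\sigma$ lies on $(u,v)$ because $\sigma$ is ``pulled taut against $[u,v_{j+1}]$''. This is not justified, and it is false in general: the shortest path from $v_{j+1}$ to $u$ may bend at reflex vertices of the \emph{complementary} chain. For instance, let $[u,v]$ be a polygonal arc winding almost all the way around $g$ (so $\sphericalangle(v,g,u)>\pi$ and $g$ sees all of $[u,v]$), and let the other part of $\partial P$ enter the narrow gap between $\overline{g\,u}$ and $\overline{v\,g}$ as a spike whose reflex tip $w$ points towards $g$ and blocks $\overline{u\,v}$. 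Then $(g,[u,v])$ is a bad guard, yet the shortest path from $v_{j+1}$ to $u$ bends at $w\notin[u,v]$. This destroys your redundancy argument for $H_1,\dots,H_{m-2}$ (the two polygon edges incident to such a bend are not in $E$), and it also invalidates the containment $Q\subseteq P_g$ used for nonemptiness: when $\sigma$ shortcuts past $g$ outside $P_g$, the pocket $Q$ is strictly larger than $P_g$. The paper's proof never needs to know where the bends are: by \cref{obs:straight_is_convex} the shortest path from $v$ to $u$ is a convex chain, so the visibility core of \emph{all} its edges is simply the intersection of the half-planes of its first and last edges ($\ell_v$ and $\ell_u$), and nonemptiness holds because $g$ itself lies in $P_E$ (\cref{obs:leftOfEdges}) and in that two-half-plane core.

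The second problem is the one you flag yourself: ``disposing of $H_0$'' is not a loose end but the crux, and it is doubtful that $H_0$ is redundant even using that $v_{j+1}$ is a vertex with $e\in E$. The paper sidesteps this entirely: since every point of the core $P'=P_E\cap H_{\ell_u}\cap H_{\ell_v}$ sees all of the pocket, hence all of $[u,v]$, one only has to find a point of $P'$ of the allowed form. If an endpoint of the chord $\ell_u\cap P_E$ lies in $P'$, that is the ``first or last point along $\ell_u$'' case; otherwise both endpoints lie strictly outside the half-plane of $\ell_v$, so by convexity the whole chord does, the $\ell_u$-constraint is inactive on $P'$, and the (nonempty, bounded) set $P'$ then contains a vertex of $P_E$. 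Recasting your argument this way---and working with the shortest path from $v$ rather than $v_{j+1}$ (noting its last edge has supporting line $\ell_u$), which removes the sliver $\overline{v\,v_{j+1}}$ and the degenerate cases you mention---closes both gaps; as written, the proposal does not yet constitute a proof.
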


\begin{proof}
    We note that this proof does not use the fact that the open chain $(u,v)$ contains a vertex of $P$, nor that the angle $\sphericalangle(v,g,u)>\pi$. These are used later in the paper. We refer to \Cref{fig:badLemma}.
    
    Let $S_v$ be the shortest path from $v$ to $u$ which, per definition, is not a single edge. 
      Let $\ell_v$ be the supporting line of the first edge of $S_v$ (the edge starting at $v$).
    Observe that $\ell_u$ is the supporting line of the last edge of $S_v$ (the edge ending at $u)$. 
    Since $g$ can see both $u$ and $v$, the edges of $S_v$ form a left-turning chain (Observation~\ref{obs:straight_is_convex}). 
    It follows that the visibility core of $S_v$ is the intersection of the halfplanes left of $\ell_u$ and $\ell_v$. 
   Observe that the polygon $P_v$ defined by $S_v$ and the chain $[u, v]$ is well-defined (as $g \neq u$ and $g \neq v$).
    The visibility core $P'$ of $P_v$ is the intersection of $P_E$ and the visibility core of $S_v$, and $P'$ thus equals equals $P_E$ intersected by the halfplanes left of $\ell_u$ and $\ell_v$. 
    
    The guard $g$ lies in the visibility core of $S_v$ and $P_E$, and so $P'$ is not empty. 
    If at least one point of intersection $g^*$ between $\ell_u$ and $P_E$ is in $P'$, then by Observation~\ref{obs:trivialOneCenter}, the point $g^*$ can see all of $P_v$ (and in particular $[u,v]$).
    Otherwise, we note that $P'$ is non-empty and bounded in $P$, so $P'$ contains a vertex of $P_E$ or $P$ and  we conclude the lemma.
\end{proof}

\noindent
In the remainder of this paper we show three things:
\begin{itemize}[noitemsep,nolistsep]
    \item We first use our good guards to discretize the boundary of $P$ into a vertex set $\candidateStarts$ of complexity $O(n^2)$ such that there exists an optimal solution $G$ where at least one $(g, [u, v]) \in G$ has $u \in \candidateStarts$. 
    \item We next use our categorization of both bad and good guards to construct a data structure where, for any start point $u$, we can compute the guard $(g, [u, v])$ that maximizes $v$ in $O(\log^2 n)$ time. 
    \item Finally, we present an $O(k n^2\log^2 n)$-time algorithm that for every vertex  $u \in \candidateStarts$ starts a greedy procedure that iteratively adds the guard $(g, [u, v])$ that maximizes $v$ until $\partial P$ is guarded.  
\end{itemize}

\section{Discretization of optimal solutions}\label{sec:goodStarts}
We compute a set $\candidateStarts$ of $O(n^2)$ starting points such that there exists at least one $u \in \candidateStarts$ and at least one optimal guarding solution $G$ with $(g, [u, v]) \in G$. Observe that for a fixed point $c$, its visibility polygon and therefore the inclusion-wise maximal chains along $\partial P$ that $c$ can see are also fixed.

\begin{definition}
    \label{def:guard_set}
    Let $P$ be a simple polygon and denote by $C_P$ the arrangement of all supporting lines of edges in $P$.
    We define the \emph{candidate set} $\snapGuards$ as follows: 
    For each vertex $c \in C_P$ let $\overline{a\,b}$ and $\overline{e\,f}$ be its defining edges. 
    For each $x \in \{a, b, e, f\}$ let $(c, [u_x, v_x])$ be a contiguous guard such that $[u_x, v_x]$ is the unique inclusion-wise maximal interval containing $x$. 
    If such a guard 
    $(c, [u_x, v_x])$ exists, we add it to~$\snapGuards$.    
\end{definition}

We can use $\snapGuards$ to discretize the boundary of $P$:

\begin{definition}
    \label{def:V^*}
    Let $P$ be a simple polygon and $\snapGuards$ be its set of discrete guards. We create a set of points on $\partial P$  denoted by $\candidateStarts$ which consists of all vertices in $P$, and $u_x$ for every $(c, [u_x, v_x]) \in \snapGuards$. 
\end{definition}


\begin{lemma}\label{lem:verticesAreGood}
    Let $(g, [u,v])$ be a contiguous guard such that $[u,v]\neq \partial P$ is maximal. Then either $v$ is a reflex vertex of $P$ or $\overline{v \,g}$ contains a reflex vertex of $P$ in its interior. 
\end{lemma}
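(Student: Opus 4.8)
The plan is to argue by contradiction: suppose $[u,v] \neq \partial P$ is maximal but $v$ is \emph{not} a reflex vertex of $P$ and $\overline{v\,g}$ contains no reflex vertex of $P$ in its interior. I want to show that under these assumptions, $g$ can actually see slightly \emph{past} $v$ along $\partial P$, contradicting maximality of $[u,v]$. The key geometric fact to exploit is that $g$ sees $v$, and the segment $\overline{g\,v}$ is therefore contained in $P$; I want to understand what obstacle could prevent $g$ from seeing a point $v' = v + \delta$ just beyond $v$ in counterclockwise order, for small $\delta > 0$.

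First I would set up the local picture at $v$. Since $[u,v] \neq \partial P$, the chain continues past $v$; let $e^+ = \overline{v_j \, v_{j+1}}$ be the edge of $\partial P$ on which the point just-past-$v$ lies (so either $v$ is in the interior of $e^+$, or $v = v_j$ is a vertex of $P$). I would then consider the segments $\overline{g\,v'}$ for $v'$ ranging over a small counterclockwise neighborhood of $v$ on $\partial P$. If $g$ does not see all such $v'$, then as $v' \to v^+$, the segment $\overline{g\,v'}$ must be blocked, and by a standard compactness/limiting argument the ``first obstruction'' as $v' \to v$ forces $\overline{g\,v}$ to graze a reflex vertex of $P$ or to run along $\partial P$ at $v$ itself. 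Concretely: the only ways a contiguous guard can fail to extend past $v$ are (a) $v$ is a reflex vertex and the boundary turns away from $g$ there, or (b) the sight line $\overline{g\,v}$ passes through a reflex vertex $r$ of $P$ lying strictly between $g$ and $v$, so that $r$ blocks visibility to points immediately past $v$. Ruling out both (a) and (b) is exactly the negation of the lemma's conclusion, so in that case I can find $\delta>0$ with $g$ seeing $[u, v+\delta]$, contradicting maximality.

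The main work — and the step I expect to be the main obstacle — is making the ``first obstruction'' argument rigorous, i.e. carefully enumerating the combinatorial cases for how $\overline{g\,v'}$ transitions from ``inside $P$'' to ``crossing $\partial P$'' as $v'$ moves past $v$, and checking that in each surviving case the obstruction is caused precisely by a reflex vertex of $P$ either at $v$ or in the interior of $\overline{v\,g}$. In particular one must handle the case where $v$ is a (convex) vertex of $P$ versus an interior point of an edge, and the case where the obstruction occurs because $\overline{g\,v}$ already lies along an edge of $\partial P$ incident to $v$. A convex vertex at $v$, or $v$ interior to an edge with no reflex vertex on $\overline{g\,v}$, always admits a small extension, since the local wedge of directions from $g$ that stay inside $P$ is open on the side continuing past $v$; the boundary only ``closes off'' at a reflex vertex. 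Once this local analysis is in place, the contradiction with maximality of $[u,v]$ is immediate, completing the proof.
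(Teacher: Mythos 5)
Your proposal is correct and is essentially the paper's argument in contrapositive form: both reduce to the local fact that the only obstruction to extending visibility just past $v$ is a reflex vertex at $v$ or in the interior of $\overline{v\,g}$. The paper disposes of the ``first obstruction'' step you flag as the main obstacle more cleanly: since $[u,v]$ is maximal, for every small $\varepsilon>0$ the shortest path from $g$ to a point of $(v,v+\varepsilon]$ must turn, shortest paths in a simple polygon turn only at reflex vertices, and as $\varepsilon\to 0$ these paths converge to $\overline{g\,v}$, so (there being finitely many vertices) the turning reflex vertex is either $v$ itself or lies in the interior of $\overline{v\,g}$ --- adopting this device would close the case enumeration you left open.
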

\begin{proof}
        Since $[u, v]$ is maximal, there exists a value $\eps^* > 0$ such that for all $\eps \in (0, \eps^*]$, the shortest path from $g$ to the point $v + \eps$ on $\partial P$ visits a reflex vertex $x$ of $\partial P$. 
        If $x = v$ then $v$ is a reflex vertex of $P$. 
        Otherwise, $\overline{v \, g}$ contains $x$ in its interior. 
\end{proof}

\begin{figure}
    \centering
    \includegraphics{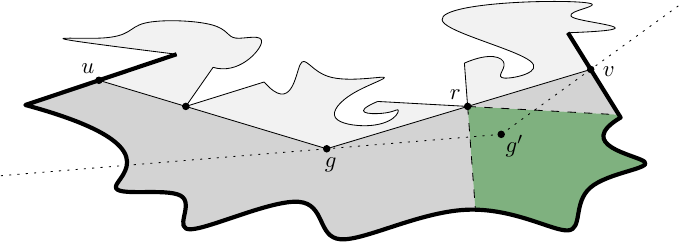}
    \caption{Illustration of Theorem~\ref{thm:good_seed} which shows a bad contiguous guard $(g, [u, v])$. For a description of the vertices in the figure, we refer to the corresponding proof. We show the polygon $P_g$ in gray.}
    \label{fig:goodSeed}
\end{figure}

\begin{theorem}\label{thm:good_seed}
      Let $G$ be a minimal set of contiguous guards guarding $\partial P$ with $|G| > 1$. There exists a set of contiguous guards $G'$ with $|G| = |G'|$ where at least one $(g, [u, v]) \in G'$ has  $u \in \candidateStarts$.
\end{theorem}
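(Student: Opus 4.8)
The plan is to take an arbitrary minimal solution $G$ with $|G|>1$ and fix one of its guards so that its start point becomes a point of $\candidateStarts$. First I would pick any guard $(g,[u,v])\in G$. Since $|G|>1$ we have $[u,v]\neq \partial P$, and we may assume $[u,v]$ is maximal: replacing $v$ by the largest $v^*$ with $(g,[u,v^*])$ still a contiguous guard only enlarges the covered chain, and the subsequent guard in $G$ (which covers a chain starting at or before $v$) can then be shrunk to start at $v^*$, keeping the solution valid and of the same size. So WLOG $(g,[u,v])$ is maximal. Now distinguish whether this guard is good or bad in the sense of \cref{def:bad}.

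If $(g,[u,v])$ is good, then by \cref{cor:good_is_good} there is a contiguous guard $(c,[u,v])$ with $c$ a vertex of $C_P$ defined by the supporting lines of two edges $e_1,e_2$, and (unless $[u,v]$ contains no vertex of $P$) with $[u,v]$ containing a vertex of each of $e_1,e_2$. In the latter case, $[u,v]$ lies on a single edge $e$, and by \cref{lem:first} an endpoint $w$ of $e$ sees $[u,v]$; since $[u,v]$ is maximal and $[u,v]\neq\partial P$, \cref{lem:verticesAreGood} applies to $(c,[u,v])$ and forces $v$ to be a reflex vertex or $\overline{v\,c}$ to pass through a reflex vertex — but here I instead argue directly: the chain $[u,v]$ is a subchain of a single polygon edge, so its start point $u$ is either a vertex of $P$ (hence in $\candidateStarts$) or we can slide the guard's start point back to the preceding vertex of $P$, which still sees the (now larger) chain, again landing in $\candidateStarts$; the neighbouring guard in $G$ is adjusted as before. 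In the generic case, $c$ is a vertex of $C_P$ with defining edges $e_1,e_2$, and $[u,v]$ contains a vertex of each. Consider the inclusion-wise maximal chain $[u',v']$ that $c$ can see containing, say, the vertex of $e_1$ lying in $[u,v]$; by maximality of $[u,v]$ for the guard $c$ (which we may assume after the enlargement step, since enlarging for $g$ and then re-snapping to $c$ via \cref{cor:good_is_good} preserves maximality), this maximal chain is exactly $[u,v]$, so $(c,[u,v])\in\snapGuards$ by \cref{def:guard_set}, whence $u=u_x\in\candidateStarts$. Replacing $(g,[u,v])$ by $(c,[u,v])$ in $G$ gives the desired $G'$.

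If $(g,[u,v])$ is bad, I will \emph{first transform $G$ so that it contains a good guard}, then apply the previous case. This is where \cref{lem:AlmostSnapBadGuards} and the bad-guard structure enter, and it is the main obstacle. The idea (mirroring the technical overview's claim that ``any solution with more than one guard can be transformed to contain at least one good guard''): among the guards of $G$, look at the ``interface'' between a bad guard and its successor $(g_2,[v,v_2])$. Because $(g,[u,v])$ is bad, condition \ref{itm:third} gives $\sphericalangle(v,g,u)>\pi$, so by \cref{obs:straight_is_convex} the shortest path from $u$ to $v$ is a convex left-turning chain, and by \ref{itm:first} it is nontrivial. Using \cref{lem:AlmostSnapBadGuards} we can replace $g$ by a point $c$ that is either a vertex of the visibility core $P_E$ (hence a vertex of $C_P$), or lies on the supporting line $\ell_u$ of the last shortest-path edge — in either case the replacement $(c,[u,v])$ is still a valid guard and the angle/shortest-path structure of the pair is now controlled. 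The point is that after re-snapping $v$ to be maximal for $c$, at least one of: the new guard at $u$ becomes good; or we can move the boundary between this guard and the next so that one of the two becomes good. Concretely, if $v$ is pushed forward past the vertex $v_{j+1}$, the shortest path from the new endpoint to $u$ may lose its last edge, killing condition \ref{itm:fourth}; and if it cannot be pushed that far, then the chain $[u,v]$ is contained in at most the two edges incident to $v_{j+1}$ together with the first shortest-path edge, which (since $[u,v]$ is maximal and $\neq\partial P$) forces enough structure — via \cref{lem:verticesAreGood} applied to the neighbouring guard — to exhibit a good guard in the modified solution. Once $G$ has a good guard, the good-guard case produces $G'$ with a start point in $\candidateStarts$ and $|G'|=|G|$.

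The hard part will be making the bad-to-good transformation fully rigorous: one must carefully track how re-snapping a single guard to be maximal affects its neighbour, ensure the total count never increases, and verify that iterating this (finitely many times, since each step strictly increases some monotone quantity such as the covered prefix of a fixed guard or the number of shortest-path edges that vanish) terminates with a good guard whose maximal chain is one of the $O(n^2)$ chains recorded in $\snapGuards$. The good-guard case itself is essentially a bookkeeping argument combining \cref{cor:good_is_good}, \cref{lem:verticesAreGood}, and \cref{def:guard_set,def:V^*}, and I expect it to go through routinely.
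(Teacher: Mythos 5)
Your good-guard case follows the paper's route, but note one slip: you claim that after re-snapping to $c$ via \cref{cor:good_is_good} the chain $[u,v]$ \emph{is} the inclusion-wise maximal chain of $c$ containing $x$; this need not hold, since $c$ may see strictly more than $g$ does. The slip is harmless: simply replace the guard by the maximal guard $(c,[u_x,v_x])\in\snapGuards$ with $[u,v]\subseteq[u_x,v_x]$; its start $u_x$ lies in $\candidateStarts$ by \cref{def:V^*} and the solution still covers $\partial P$. The genuine gap is the bad-guard case, which you yourself flag as not rigorous, and your sketch would not close it. \cref{lem:AlmostSnapBadGuards} cannot help here: the line $\ell_u$ it uses supports the last edge of a \emph{shortest path}, not an edge of $P$, so the replacement point it produces is in general not a vertex of $C_P$ and gives no guard of $\snapGuards$; the paper never uses that lemma in this theorem (it is needed only for the query data structure of \cref{lem:realCompute}). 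Your subsequent case distinction (``push $v$ past $v_{j+1}$'', ``the chain is contained in at most two edges\dots'') does not identify why a good guard must appear; the claim ``at least one of the two guards becomes good'' is exactly the statement that needs proof.

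The missing idea is the following. After making all chains maximal, if some bad guard $(g,[u,v])$ has $u$ a vertex of $P$ you are done, and if $v$ is a vertex of $P$ you shrink the other guard covering $v$ so that its chain starts at $v$. Otherwise, maximality and \cref{lem:verticesAreGood} give a reflex vertex $r$ in the interior of $\overline{v\,g}$ with $r\notin[u,v]$; some other guard $(g',[u',v'])$ covers $r$. If $r$ is an endpoint of $[u',v']$ you are again done; if $r\in(u',v')$, then $g'$ lies left of both supporting lines of the edges incident to $r$, hence left of the supporting line of $\overline{v\,g}$, and together with $\sphericalangle(v,g,u)>\pi$ this forces $g'$ to lie in the polygon $P_g$ bounded by $[u,v]$, $\overline{v\,g}$ and $\overline{g\,u}$. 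Shrinking $(g',[u',v'])$ to $[a,b]=[u',v']\setminus(u,v)$ (a single chain, by minimality of $G$) keeps $\partial P$ covered and yields $\sphericalangle(a,g',b)\le\pi$, i.e., a good guard, after which your good-guard case applies. Without an argument of this kind your proof is incomplete.
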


\begin{proof}
    We show how to transform $G$ into such a set $G'$.
    We start by replacing every guard $(g, [u, v]) \in G$ by a guard $(g, [u', v'])$ such that $[u, v] \subseteq [u', v']$ and the interval $[u', v']$ is maximal, we continue as follows.

    Suppose that $G$ contains a good guard $(g, [u, v]) \not \in \snapGuards$. 
       If $[u, v]$ contains no vertex of $P$ then by Lemma~\ref{lem:first}, we may replace $(g, [u, v])$ by $(a, [a, b])$ where $\overline{ab}$ is the edge containing $[u, v]$ and we conclude the theorem.
       Otherwise, we apply Corollary~\ref{cor:good_is_good} and 
    note that there exists a guard $(c, [u, v])$ where $c$ is a vertex of $C_P$ and $[u, v]$ contains a vertex $x$ of the defining edges of $c$.
    We observe that $\snapGuards$ contains a guard $(c, [u_x, v_x])$ where $[u_x, v_x]$ contains $x$ and is maximal. So, $[u, v] \subseteq [u_x, v_x]$. We update $G$ by replacing $(g, [u, v])$ by $(c, [u_x, v_x])$, concluding the proof. 
    
    Now, suppose that $G$ contains only bad guards and let $(g, [u, v]) \in G$ be such a bad guard. 
    If $u$ is a vertex of $P$ then we conclude the proof.
    If $v$ is a vertex of $P$ then, because $|G| > 1$, there has to be at least one other guard $(g', [u', v']) \in G$ such that $v \in [u', v']$. We replace this guard by $(g', [v, v'])$ so that the start of its chain is a vertex of $P$ and we conclude the proof. 
    
    If neither $u$ nor $v$ is a vertex of $P$, then we apply Lemma~\ref{lem:verticesAreGood} to note that $\overline{v \, g}$ contains a reflex vertex $r$ of $P$ with $r \not \in [u, v]$. Thus, there exists some guard $(g', [u', v'])$ with $r \in [u', v']$. 
    If $u' = r$ or $v' = r$ then we conclude the proof as above. 
    If instead $r \in (u', v')$, then $g'$ lies left of both supporting lines of edges incident to $r$. This in turn implies that $g'$ is left of the supporting line of $\overline{v\,g}$. As $\sphericalangle(v,g,u)>\pi$, any point that sees $r$ and is left of the supporting line of $\overline{v\,g}$ must 
    lie in the polygon $P_g$ defined by $[u, v]$, $\overline{v \, g}$ and $\overline{g \, u}$ (see \Cref{fig:goodSeed}).
    Observe that the points in $[u', v'] \setminus (u, v)$ form a single closed chain, else $[u, v] \subseteq [u', v']$, which contradicts the minimality of $G$. 
    Define $[a, b] = [u', v'] \setminus (u, v)$. 
    If we replace $(g', [u', v'])$ by $(g', [a, b])$ the resulting set of contiguous guards still guard $\partial P$. However, crucially, $[a,b]$ lies left of the supporting line through $\overline{g' \, v }$ and left of either the supporting line of $\overline{u \, g'}$ or $\overline{g \, g'}$. 
    As $g'$ lies in $P_g$ and $\sphericalangle(v,g,u)>\pi$. this implies that $\sphericalangle(a,g',b) \leq \pi$, but then $G$ contains a good guard.
\end{proof}

\section{A guarding data structure}\label{sec:visibilityCompute}

Definition~\ref{def:guard_set} defines a discrete set of guards $\snapGuards$. 
We prove that we can compute $\snapGuards$ in $O(n^2 \log n)$ time. 
Our strategy is as follows: let $C_P$ be the arrangement of lines supporting edges of $P$. 
For each vertex $c \in C_P$ with defining edges $\overline{a\,b}$ and $\overline{e \, f }$ we want to compute for $x \in \{ a, b, e, f\}$ the unique inclusion-wise maximal chain $[u, v]$ such that $(c, [u, v])$ is a contiguous guard and $x \in [u, v]$. 
We note that for all edges of $P$ that intersect $(u, v)$, the point $c$ must lie left of the supporting line of the edge. 
For $x \in \{ a, b, e, f \}$, we first consider maximal contiguous sequences of edges that have this property:

\begin{definition}
    \label{def:interval}
    Let $g$ be a point in $P$ and let a point $m \in \partial P$ be given such that $g$ sees $m$.
    We define $E_{g,m}=\{e_i,e_{i+1},\ldots,e_j\}$ as the maximal contiguous sequence of edges along $\partial P$ such that $m$ lies on an edge of $E_{g,m}$, and $g$ lies left of the supporting line through each edge of $E_{g,m}$, i.e., in its visibility core.
\end{definition}

\begin{figure}
    \centering
    \includegraphics{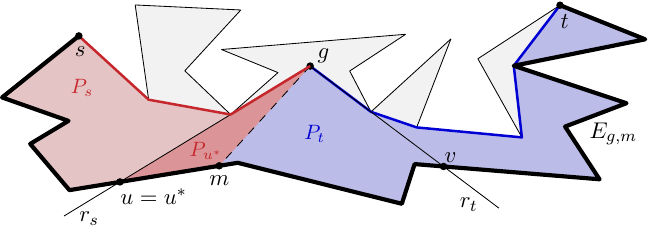}
    \caption{Computing the inclusion-wise maximal area that can be seen by $g$ containing $m$ via the contiguous edge set $E_{g,m}$ (bold black edges) of which $g$ lies left of, defining $s$ and $t$. With the shortest paths from $s$ (in red) and $t$ (in blue) to $g$ respectively, the rays $r_s$ and $r_t$ are constructed. The intersection with $[s,t]$ form $u$ and $v$. The interval $[u,v]$ is inclusionwise maximal, s.t.~it is visible from $g$ and $m\in[u,v]$.}
    \label{fig:visiblityCompute}
\end{figure}

\begin{lemma}\label{lem:computeVisibility}
    Let $g$ be a point in $P$ and a point $m \in \partial P$ be given such that $g$ sees $m$. 
    Let $s$ and $t$ be the vertices defining the start and end of a contiguous subset $E\subseteq E_{g,m}$, such that $m\in[s,t]$ and $[s,t]\neq\partial P$. 
    Consider the shortest path $\pi_s$ from $s$ to $g$ in $P$ and define by $r_s$ the ray from $g$ through the last edge of $\pi_s$. 
    We define $\ell_t$ analogously.
    Then the rays  $r_s$ and $r_t$ hit $[s,t]$ in $u$ and $v$ where $[u, v]$ defines the unique maximal chain along $\partial P$ that is visible from $g$, contains $m$ and is contained in $[s,t]$ (i.e., in $E$). 
\end{lemma}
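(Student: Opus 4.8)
The plan is to first observe that the chain $[u,v]$ in the statement is well-defined, and then to characterise its left endpoint $u$ via the ray $r_s$; the endpoint $v$ follows by the mirror-image argument (reverse the orientation of $\partial P$, swapping $s$ and $t$). For well-definedness: among all chains $[u',v']\subseteq[s,t]$ with $m\in[u',v']$ all of whose points are visible from $g$, the union of any two is again such a chain, since both contain $m$ so their union is a sub-chain of $[s,t]$ and is still entirely visible from $g$; hence a unique inclusion-maximal one exists, and this is the $[u,v]$ of the statement. So it suffices to prove that this maximal chain reaches, on the $s$-side, exactly the point at which $r_s$ first meets $\partial P$.

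My approach is to localise the argument to an auxiliary polygon. Since $g$ sees $m$, the segment $\overline{g\,m}$ lies in $P$; let $R$ be the simple polygon bounded by the shortest path $\pi_s$ from $s$ to $g$, the chain $[s,m]$, and $\overline{m\,g}$. One checks that $R\subseteq P$---the boundary of $R$ consists of curves inside $P$ that touch $\partial P$ only tangentially, so $\partial P$ cannot cross into $\operatorname{int}(R)$---and that $\pi_s$ is also the shortest path from $s$ to $g$ inside $R$, so every interior vertex of $\pi_s$ is a \emph{reflex} vertex of $R$. The single place where the hypothesis is used is the following transparency claim: every vertex $r$ of $R$ that is not an interior vertex of $\pi_s$---that is, $r\in\{g,m\}$ or $r$ is a vertex of $P$ lying on $(s,m)$---has the property that the extension of the ray from $g$ through $r$ beyond $r$ leaves $R$ immediately. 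For $r\in(s,m)$ this is exactly the reflex-vertex fact: the two edges of $P$ at $r$ belong to $E$, so $g$ lies left of both their supporting lines, and a short coordinate computation then puts the extension direction in the exterior wedge of $P$ (hence of $R\subseteq P$) at $r$; for $r\in\{g,m\}$ a similar computation shows $R$ is convex at $r$.

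It follows that, inside $R$, the only vertex that can obstruct $g$'s view of $[s,m]$ is the interior vertex of $\pi_s$ adjacent to $g$, say $q$ (assuming $\pi_s$ has at least two edges). Since $q$ is reflex in $R$, the ray from $g$ through $q$ continues into $R$ past $q$ and meets $\partial R$ again at a point $w$, which must lie on $[s,m]$: it cannot lie on $\pi_s$ by uniqueness of the shortest path, nor in the relative interior of $\overline{gm}$, which $g$ sees. The pocket of $R$ cut off by the window chord $\overline{q\,w}$ has $s$ on its boundary, because $\pi_s$ runs through $q$; hence $g$ sees none of $[s,w)$ and all of $[w,m]$. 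Finally, visibility inside $R$ coincides with visibility inside $P$ along $[s,m]$: a segment $\overline{g\,p}$ with $p\in[s,m]$ that stays in $P$ cannot leave $R$, since it would have to cross $\pi_s$---impossible past its first edge without violating uniqueness of the shortest path, while crossing it along the first edge forces $p=w$---or $\overline{gm}$, impossible for two segments emanating from $g$. Therefore the maximal visible chain through $m$ inside $[s,t]\supseteq[s,m]$ has left endpoint exactly $w$, and since $\operatorname{int}(R)\subseteq\operatorname{int}(P)$ the ray $r_s$---which is precisely the ray from $g$ through $q$---meets $\partial P$ first at $w$, so it hits $[s,t]$ in $u=w$. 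When $\pi_s$ is a single edge (i.e., $g$ sees $s$), $R$ has no interior $\pi_s$-vertex, nothing obstructs $g$ along $[s,m]$, and $u=s$, which is where $r_s$ (the ray from $g$ through $s$) first meets $\partial P$. The mirror-image argument gives $v$, with $m\in[u,v]\subseteq[s,t]$, and by the first paragraph this is the unique maximal chain, as required.

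The step I expect to be the real obstacle is the second paragraph: establishing $R\subseteq P$ and the behaviour of $\pi_s$ inside $R$ (the standard funnel/pocket structure of shortest paths in simple polygons), and, above all, the transparency claim for the vertices of $R$ on $[s,m]$. This last point is precisely where the hypotheses ``$E$ contiguous'' and ``$g$ left of every supporting line of $E$'' are essential: without them, a reflex vertex on the chain $[s,m]$ itself could hide part of the chain from $g$, the visible portion of $[s,m]$ through $m$ would split into several pieces, and the ray $r_s$ would no longer pin down the endpoint.
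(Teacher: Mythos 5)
Your proof is correct (at the same level of treatment of degeneracies as the paper), but it runs along a genuinely different track than the paper's argument. The paper defines the critical point extremally: it takes the maximal $u^*\in[s,m]$ whose segment to $g$ still meets $\pi_s$ in more than one point, applies Observation~\ref{obs:trivialOneCenter} to the subpolygon bounded by $[u^*,m]$, $\overline{m\,g}$ and $\overline{g\,u^*}$ (whose edges all have $g$ on their left) to get visibility of $[u^*,m]$ in one shot, shows this subpolygon contains no point of $\pi_s$ in its interior and hence lies in $P$, and only then identifies the ray to $u^*$ with $r_s$ via a shortcut argument. You instead fix the auxiliary polygon $R$ bounded by $\pi_s$, $[s,m]$ and $\overline{m\,g}$ and carry out a visibility-polygon (window/pocket) analysis of $g$ inside $R$: your local ``transparency'' claim at chain vertices is the pointwise counterpart of the left-of-supporting-lines hypothesis that the paper exploits globally through Observation~\ref{obs:trivialOneCenter}, and it lets you conclude that the only window is the one cast by the $\pi_s$-vertex $q$ adjacent to $g$, whose chord endpoint $w$ is exactly $u$. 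The paper's route is shorter because Observation~\ref{obs:trivialOneCenter} does the heavy lifting; yours is more self-contained about the blocking structure but leans on standard shortest-path facts that deserve a word more care: the step ``since $q$ is reflex in $R$, the ray from $g$ through $q$ continues into $R$ past $q$'' does not follow from reflexness alone (a chord through a reflex vertex can exit immediately) but from the tautness of $\pi_s$ at $q$, i.e.\ the local optimality of the shortest path, and your claim that only the $\pi_s$-vertex adjacent to $g$ can obstruct tacitly uses that the other interior vertices of $\pi_s$ are invisible from $g$ (or equivalently lie in the pocket behind $\overline{q\,w}$). Your $R$-versus-$P$ visibility equivalence via the no-crossing argument plays the same role as the paper's containment $P_{u^*}\subseteq P_s\subseteq P$, and your explicit uniqueness/well-definedness paragraph is a small addition the paper leaves implicit.
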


\begin{proof}
    Consider the polygon defined by $\pi_s$, $[s, t]$ and the shortest path $\pi_t$ from $t$ to $g$. As $g$ sees $m$, the segment $\overline{g\,m}$ splits this polygon into two polygons $P_s\subset P$ and $P_t\subset P$ ($P_s$ contains $s$, and $P_t$ contains~$t$). 
    As $[s,t]\neq\partial P$, $P_s$ and $P_t$ only intersect in $\overline{g\, m}$ and in particular, the shortest paths $\pi_s$ and $\pi_t$ are disjoint except for possibly at their last edge (ending at $g$), which may be colinear with $\overline{g\,m}$.

    Let us now consider the maximal $u^*\in[s,m]$ such that $\overline{g\,u^*}$ intersects the shortest path from $g$ to $s$ in more than one point (i.e. not only in $g$).
    First, we show that $\overline{g\,u^*}\subset P$.
    The point $g$ lies left of all supporting lines of edges of the polygon $P_{u^*}$ defined by $[u^*,m]$, $\overline{m\,g}$, and $\overline{g\,u^*}$, and hence by Observation~\ref{obs:trivialOneCenter}, $g$ sees all of $[u^*, m]$ in $P_{u^*}$. By definition of $u^*$, $P_{u^*}$ contains no point of $\pi_s$ in its interior and $P_{u^*}$ is thus contained in $P_s\subset P$. Hence, $g$ sees $[u^*, m]$ in $P$ and $\overline{g\,u^*}\subset P$. 
    
    Next, observe that as $\overline{g\,u^*}\subset P$, $u^*$ is the point in $E$ that is hit by the ray $r^*$ from $g$ to $u^*$. We claim that $r^* = r_s$. Indeed, $\overline{g\,u^*}$ intersects $\pi_s$. So, if   $\overline{g\,u^*}$ intersects $\pi_s$ in a point $r$ not on the last edge of $\pi_s$, we could shorten $\pi_s$ by the segment $\overline{r\,g}\subset \overline{u^*\,g}\subset P$.
    Finally, observe that if $s\neq u^*$, then $g$ does not see any point $u'$ in $[s,u^*)$, as by definition of $u^*$ the segment $\overline{g\,u'}$ intersects an edge of $\pi_s$ in its interior.
    
    Lastly, for the points $u=u^*$ and $v=v^*$ along $E$ that are hit by $r_s$ and $r_t$, we already observed that $g$ sees $[u,m]$ and $[m,v]$ and hence $[u,v]$ is as claimed.
\end{proof}



\noindent
We can now compute $\snapGuards$ using the following well-known data structures for simple polygons~\cite{Guibas1987Optimal, Hershberger1993Ray}:

\begin{lemma}[\cite{Guibas1987Optimal}]\label{lem:shortestPath}
    We can store a polygon $P$ with $n$ vertices using $O(n)$ time and space, such that given any $s, t \in P$ we can output the shortest path from $s$ to $t$ as $O(\log n)$ balanced trees in $O(\log n)$ time.
\end{lemma}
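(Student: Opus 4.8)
The plan is to build the classical shortest-path query structure of Guibas and Hershberger on top of a triangulation of $P$. First I would triangulate $P$ in $O(n)$ time (via Chazelle's linear-time triangulation; one may relax to $O(n\log n)$ preprocessing if preferred) and form the dual tree $\T$ of the triangulation: one node per triangle, with an edge between two triangles sharing a diagonal. Since $P$ is simple, $\T$ is a tree on $O(n)$ nodes. Given query points $s,t$, locating the triangles $\Delta_s,\Delta_t$ containing them and taking the tree path between them yields a \emph{sleeve} --- a sequence of triangles bounded by a chain of diagonals $d_1,\dots,d_m$ --- and the shortest path from $s$ to $t$ in $P$ equals the shortest path from $s$ to $t$ that stays inside this sleeve. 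A funnel sweep along the sleeve computes this path, but naively in $\Theta(m)$ time, which is too slow.

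To obtain $O(\log n)$ query time, I would precompute, for the pieces of a balanced recursive (centroid or heavy-path) decomposition of $\T$, the \emph{hourglass} between the two diagonals bounding each piece. The hourglass between diagonals $d$ and $d'$ is the union of the four shortest paths connecting their endpoints; it is described by two convex chains (a genuine ``open'' hourglass) or degenerates to a single apex (a funnel). Each convex chain is stored as a balanced search tree ordered along the chain, supporting split and concatenation in $O(\log n)$ time, so each hourglass uses $O(1)$ trees. The crucial operation is concatenating two adjacent hourglasses $H(d,d')$ and $H(d',d'')$ into $H(d,d'')$: this reduces to computing common tangents between the convex chains of the two hourglasses, which can be done in $O(\log n)$ time by binary search on the tree representations. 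Building all hourglasses bottom-up over the $O(n)$-size decomposition costs $O(n)$ total time, using the amortization that each chain is absorbed into a larger hourglass only a constant number of times.

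For a query $(s,t)$, one walks the $O(\log n)$ canonical pieces of the decomposition that cover the tree path from $\Delta_s$ to $\Delta_t$, retrieves their precomputed hourglasses, and concatenates them --- together with the two degenerate hourglasses from $s$ to $d_1$ and from $d_m$ to $t$ --- in $O(\log n)$ total time. Reading off the $s$--$t$ shortest path then gives a bounded number of convex subchains, i.e.\ $O(\log n)$ balanced trees after the $O(\log n)$ splits performed during the tangent computations. The main obstacle I anticipate is precisely the hourglass-concatenation analysis: proving that common tangents between two convex chains stored as balanced trees can be found in $O(\log n)$ time even when a chain degenerates to a shared apex (the funnel case), and bounding total preprocessing to $O(n)$ by charging the sizes of the intermediate chains correctly. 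The remaining ingredients --- linear-time triangulation, tree-path extraction via the balanced decomposition, and correctness of the funnel algorithm within a sleeve --- are standard.
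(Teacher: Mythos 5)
This lemma is not proved in the paper at all---it is imported as a black box from Guibas and Hershberger~\cite{Guibas1987Optimal}---and your sketch is essentially that same construction: triangulate, take the dual tree, precompute hourglasses over a balanced decomposition, store convex chains as balanced trees, and combine hourglasses via common-tangent searches. So in approach you coincide with the cited source rather than diverge from the paper.

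As a standalone proof, however, the quantitative claims are not yet established by what you wrote, and the sharpest gap is the query time. You combine the $O(\log n)$ canonical hourglasses along the decomposition-tree path, and each combination is a tangent computation on balanced trees costing $O(\log n)$; as stated this yields $O(\log^2 n)$ per query, not the claimed $O(\log n)$. Closing that factor is precisely the delicate part of \cite{Guibas1987Optimal} (Hershberger's later simplification is the usual clean reference), and it is not cosmetic for this paper: Lemma~\ref{lem:realCompute} wraps a binary search with $O(\log n)$ rounds around these shortest-path queries, so an $O(\log^2 n)$ query bound would degrade the greedy step to $O(\log^3 n)$ and the final running time accordingly. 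Similarly, your preprocessing argument (``each chain is absorbed a constant number of times'') is not the right accounting as phrased; the time bound does follow from summing $O(\log(\text{subpolygon size}))$ tangent costs over the balanced decomposition, but keeping the \emph{space} at $O(n)$ rather than $O(n\log n)$ requires the shared/implicit representation of hourglasses from the original paper, which your sketch does not address. None of this affects how the lemma is used here---the paper cites it rather than proving it---but your write-up as it stands establishes weaker bounds than the statement asserts.
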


\begin{lemma}[\cite{Hershberger1993Ray}]\label{lem:rayShooting}
     We can store a polygon $P$ with $n$ vertices using $O(n)$ time and space, such that given any ray $r$ with its origin in $P$ we can detect the point on $\partial P$ hit by $r$ in $O(\log n)$ time. 
\end{lemma}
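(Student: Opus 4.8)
The plan is to reduce ray shooting to tracing the query segment through a triangulation of $P$, and to avoid the resulting linear-time trace by means of a balanced hierarchical decomposition of that triangulation. First I would triangulate $P$ in $O(n)$ time (Chazelle's linear-time triangulation) and regard its dual, which for a simple polygon is a tree with $O(n)$ triangle-nodes of degree at most three. For a query ray $\rho$ with origin $p\in P$, the portion of $\rho$ from $p$ up to the first boundary point $q\in\partial P$ it meets is a straight segment $\overline{pq}\subseteq P$; being convex, it crosses each triangulation diagonal at most once, so the triangles it visits form a simple path in the dual tree from the triangle of $p$ to the triangle of $q$. Hence ray shooting is exactly: locate the triangle of $p$, then follow $\overline{pq}$ through this path until it exits $P$ through a boundary edge. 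The start triangle is easy to find, but the dual path can have length $\Theta(n)$, so a triangle-by-triangle walk is too slow.

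To skip over large parts of the polygon, I would build a balanced decomposition tree $\mathcal D$: repeatedly choose a single diagonal that splits the current subpolygon into two subpolygons, each with at most a constant fraction of the triangles, and recurse. This gives a binary tree of depth $O(\log n)$ and $O(n)$ nodes in which every node $\nu$ owns a separating diagonal $d_\nu$ and the subpolygon $P_\nu$ it bounds. The tree doubles as a point-location structure: descending from the root, a single side test against each $d_\nu$ selects the child region containing $p$, locating its triangle in $O(\log n)$ time. The essential auxiliary object at $\nu$ is the \emph{hourglass} of $P_\nu$: the two geodesic chains joining the endpoints of $P_\nu$'s bounding diagonals, which I would obtain from the shortest-path structure of \cref{lem:shortestPath}. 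Each such chain is convex, so testing whether a straight segment that enters $P_\nu$ through one diagonal leaves through the opposite diagonal, or instead first strikes one of the convex chains (a piece of $\partial P$), is a tangent/binary search on a convex chain and thus takes $O(\log|P_\nu|)$ time.

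The query then becomes a guided walk over $\mathcal D$ instead of across the triangulation. After locating the triangle of $p$, I follow $\overline{pq}$ through the hierarchy: at a visited node I use one intersection test against its separating diagonal together with a tangent query against its hourglass chains to decide, in time logarithmic in the region size, either that the first boundary hit lies on a convex chain inside the current region---returning that edge---or that the segment passes through the diagonal, so the search crosses into the adjacent region. The balanced depth is what makes this efficient: the construction is arranged so that only $O(\log n)$ hourglasses are probed before the exit edge is pinned down.

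The main obstacle is to make the total query time $O(\log n)$ rather than $O(\log^2 n)$: performing an independent binary search on each of the $O(\log n)$ hourglass chains would already cost $O(\log^2 n)$, so the successive convex-chain searches along the traversed path must be coordinated---via fractional cascading across the chains on a root path, or via the ``pedestrian'' amortization that charges the whole traversal $O(\log n)$ in total---which is the delicate heart of the proof. A secondary obstacle is attaining $O(n)$ space and preprocessing: the hourglass chains at different nodes overlap heavily, so rather than storing each explicitly they must share a single representation tied to the shortest-path tree, ensuring that $\mathcal D$, its point-location layer, and all hourglasses together occupy linear space and are built in linear time on top of the triangulation.
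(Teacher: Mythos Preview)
The paper does not prove this lemma at all; it is quoted as a black-box result from the literature with the citation \cite{Hershberger1993Ray} (Hershberger and Suri, ``A pedestrian approach to ray shooting''), so there is no in-paper proof to compare against. Your sketch is in fact a faithful outline of precisely that cited construction: linear-time triangulation, a balanced hierarchical (geodesic) decomposition whose nodes carry hourglasses, and a walk through the hierarchy whose cost is kept to $O(\log n)$ by the ``pedestrian'' amortization you allude to; the linear-space bound comes from sharing the convex chains rather than storing each hourglass explicitly. In short, your proposal matches the source the paper defers to, and there is nothing further the paper itself adds.
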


\begin{lemma}\label{lem:realComputePrimitive}
    Let $g$ be a point in $P$ and let $m\in\partial P$ be given. Let $s$ and $t$ be the vertices defining the start and end of a contiguous subset $E\subseteq E_{g,m}$, such that $m\in[s,t]$ and $[s,t]\neq\partial P$. Moreover, let $g \neq s$ and $g \neq t$. Given the data structures from Lemmas~\ref{lem:shortestPath} and \ref{lem:rayShooting}, one can compute the maximal chain $[u, v]$ along $\partial P$ that is visible from $g$, contains $m$ and is contained in $[s,t]$ (i.e., in $E$) in $O(\log n)$ time. 
\end{lemma}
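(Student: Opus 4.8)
The plan is to combine Lemma~\ref{lem:computeVisibility} with the two data structures of Lemmas~\ref{lem:shortestPath} and~\ref{lem:rayShooting}. Lemma~\ref{lem:computeVisibility} already tells us exactly what $[u,v]$ is: we take the shortest path $\pi_s$ from $s$ to $g$, form the ray $r_s$ emanating from $g$ along the last edge of $\pi_s$, do the same to obtain $r_t$ from $\pi_t$, and then $u$ and $v$ are the points where $r_s$ and $r_t$ meet $[s,t]$. So algorithmically the steps are: (1) query the shortest-path structure of Lemma~\ref{lem:shortestPath} for $\pi_s$ and for $\pi_t$; (2) extract the last edge of each path and hence the two rays $r_s, r_t$; (3) query the ray-shooting structure of Lemma~\ref{lem:rayShooting} with $r_s$ and with $r_t$ to find the points they hit on $\partial P$, which by Lemma~\ref{lem:computeVisibility} are $u$ and $v$; (4) return $[u,v]$.

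First I would check that each of these steps takes $O(\log n)$ time. Steps (1) cost $O(\log n)$ each by Lemma~\ref{lem:shortestPath}, which also guarantees $g \neq s$ and $g \neq t$ are the nondegenerate inputs we need so that ``the last edge of $\pi_s$'' is well defined — this is exactly why the hypotheses $g \neq s$ and $g \neq t$ appear in the statement. For step (2), the shortest path is returned as $O(\log n)$ balanced trees concatenated in order; the last edge of $\pi_s$ is obtained by reading the last two vertices of the path, i.e.\ the rightmost element of the last tree (and its predecessor), which takes $O(\log n)$ time on a balanced tree. From two consecutive vertices we get the supporting line and orient the ray away from $g$, giving $r_s$ (and symmetrically $r_t$) in constant additional time. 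Step (3) costs $O(\log n)$ each by Lemma~\ref{lem:rayShooting}. Summing, the total is $O(\log n)$.

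For correctness I would simply invoke Lemma~\ref{lem:computeVisibility}, whose hypotheses ($g$ sees $m$, $m \in [s,t]$, $[s,t]\neq\partial P$, and $E \subseteq E_{g,m}$) are precisely those assumed here; it yields that the pair of points hit on $[s,t]$ by $r_s$ and $r_t$ is the unique maximal chain visible from $g$, containing $m$, and contained in $E$. One subtlety: the ray-shooting structure of Lemma~\ref{lem:rayShooting} returns the first point of $\partial P$ hit by $r_s$, whereas Lemma~\ref{lem:computeVisibility} speaks of the point hit ``along $E$''; I would note that since $\overline{g\,u}\subset P$ (shown inside the proof of Lemma~\ref{lem:computeVisibility}), the first boundary point hit by $r_s$ is indeed $u$, so the two notions coincide and no extra work is needed. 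The main obstacle — and it is a mild one — is the bookkeeping in step (2): being careful that the shortest path may consist of a single edge $\overline{s\,g}$, in which case $r_s$ is simply the ray from $g$ through $s$; this is still a well-defined last edge, so the argument goes through uniformly.
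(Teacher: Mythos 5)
Your proposal is correct and follows exactly the paper's argument: the paper also derives the lemma directly from Lemma~\ref{lem:computeVisibility}, obtaining $u$ and $v$ by ray shooting from $g$ along the last edges of the shortest paths from $s$ and $t$ to $g$. Your version merely spells out the $O(\log n)$ bookkeeping (extracting the last edge from the balanced-tree representation, the single-edge path case) that the paper leaves implicit.
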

\begin{proof}
    This is an immediate consequence of Lemma~\ref{lem:computeVisibility}, where $u$ and $v$ are obtained by with ray shooting from $g$ along the last edges of the shortest paths from $s$ to $g$ and $t$ to $g$. 
\end{proof}

\begin{lemma}\label{lem:computeAllVisibility}
    We can compute the set $\snapGuards$ in $O(n^2 \log n)$ time and $O(n^2)$ space. 
\end{lemma}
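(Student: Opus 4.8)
The plan is to reduce the computation of $\snapGuards$ to $O(n^2)$ calls of the $O(\log n)$-time primitive of \cref{lem:realComputePrimitive}, after an $O(n)$-time, $O(n)$-space preprocessing that builds the shortest-path structure of \cref{lem:shortestPath}, the ray-shooting structure of \cref{lem:rayShooting}, and a point-location structure for $P$. For a vertex $c$ of $C_P$ with defining edges $e_i,e_j$ and an endpoint $x$ of $e_i$ or $e_j$, \cref{lem:realComputePrimitive} returns the maximal chain through $x$ seen by $c$ in $O(\log n)$ time once it is handed the two polygon vertices $s,t$ bounding the edge-run $E_{c,x}$ of \cref{def:interval}. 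This $E_{c,x}$ is indeed the object we need: any point seeing an interior point of an edge lies weakly left of that edge's supporting line, so the edges that the maximal chain through $x$ properly traverses form a contiguous run of weakly-left edges through $x$'s edge, which is contained in $E_{c,x}$. Thus all that remains is to produce, for the $O(n^2)$ relevant pairs $(c,x)$, the bounding vertices $s,t$ within the time budget.

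First I would do this with $n$ one-dimensional sweeps, one per supporting line. Fix an edge $e_i=\overline{v_i v_{i+1}}$ with supporting line $\ell_i$; the vertices of $C_P$ on $\ell_i$ are exactly the $n-1$ points $c_j=\ell_i\cap\ell_j$. I sort these along $\ell_i$ and move a point $c$ through them in order. For each line $\ell_k$ the positions of $c$ on $\ell_i$ that are strictly right of $\ell_k$ form a single ray from $c_k$ (or all of $\ell_i$, or none, when $\ell_k\parallel\ell_i$), while $c$ is always weakly left of $\ell_i$. I maintain a balanced search tree $T$ holding exactly the indices $k$ of edges whose supporting line currently has $c$ strictly to its right; crossing $c_j$ toggles the membership of $j$ in $T$ in $O(\log n)$ time. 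At the event $c=c_j$ (where $j\notin T$ since $c\in\ell_j$) the inclusion-maximal run of weakly-left edges around $e_i$ is read off by a cyclic predecessor and a successor query for the index $i$: if these return $q^-$ and $q^+$ then the run is $e_{q^-+1},\dots,e_{q^+-1}$, so $s=v_{q^-+1}$, $t=v_{q^+}$, and $[s,t]$ contains both $v_i$ and $v_{i+1}$. (Since $P$ is not coverable by a single contiguous guard, \cref{obs:trivialOneCenter} guarantees $T\neq\emptyset$, hence $q^\pm$ exist and $[s,t]\neq\partial P$, as \cref{lem:realComputePrimitive} requires.) I then invoke \cref{lem:realComputePrimitive} with $g=c$ once for $m=v_i$ and once for $m=v_{i+1}$; each invocation first checks, with a point-location query for $c$ and a shortest-path query, that $c\in P$ and that $c$ sees the anchor, and is skipped otherwise, and every guard produced is appended to $\snapGuards$. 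Setting up one sweep costs $O(n\log n)$ and each of its $n-1$ events costs $O(\log n)$, so a line costs $O(n\log n)$ and all $n$ lines cost $O(n^2\log n)$. Each vertex $c=\ell_i\cap\ell_j$ is visited once in the $\ell_i$-sweep (yielding its guards anchored on $e_i$) and once in the $\ell_j$-sweep (yielding those anchored on $e_j$), so this produces all of $\snapGuards$. The per-sweep storage and the polygon structures are $O(n)$ and the output is $O(n^2)$, giving $O(n^2)$ space.

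Two hypotheses of \cref{lem:realComputePrimitive} still need attention. That $[s,t]\neq\partial P$ was argued above. That $g\neq s,t$ can fail only when $c=\ell_i\cap\ell_j$ is itself a polygon vertex, which in general position happens only for the $O(n)$ adjacent pairs $\{e_i,e_j\}$. For those $O(n)$ vertices, and for any other degeneracy (parallel or coincident supporting lines, collinear triples of polygon vertices, $c\notin P$), I would instead read off the relevant maximal chains directly from the $O(n)$-time visibility polygon of the vertex, spending $O(n)$ per vertex and $O(n^2)$ in total, which is absorbed into the bound.

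The main obstacle is exactly what the sweep circumvents: computing each $E_{c,x}$ independently by walking along $\partial P$ costs $\Theta(n)$ and would push the total to $\Theta(n^3)$. The saving is that the $n-1$ arrangement vertices on a single supporting line can be handled incrementally, since consecutive ones differ in the side of a single line and hence in a single membership bit of the run-maintaining tree, bringing the amortised cost per vertex down to $O(\log n)$. The remaining ingredients---that $E_{c,x}$ is the right object, the reduction to \cref{lem:realComputePrimitive}, and the degenerate-case bookkeeping---are routine.
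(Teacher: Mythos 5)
Your proof is correct and reaches the same bounds, but it organizes the computation differently from the paper. Both arguments share the same core reduction: enumerate the $O(n^2)$ vertices of $C_P$, maintain incrementally which supporting lines have the current vertex strictly to their right (so that the maximal weakly-left run $E_{c,x}$, and hence $s,t$, can be extracted in $O(\log n)$ time), and then invoke \cref{lem:realComputePrimitive} once per anchor. The paper does this by explicitly constructing the arrangement $C_P$ inside $P$, taking an $O(n^2)$-length walk through its planar graph, and maintaining an $n$-bit left/right string in a balanced tree (exploiting that adjacent arrangement vertices differ in at most two bits); the run is read off via subtree ``contains a zero'' flags. You instead perform $n$ independent one-dimensional sweeps, one per supporting line $\ell_i$, maintaining a balanced search tree of the indices of violated (strictly-right) lines and answering cyclic predecessor/successor queries around $i$. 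This buys you a few things: you never build the arrangement or a vertex-visiting walk explicitly (only sorted intersection points per line, $O(n)$ working space per sweep), and you explicitly discharge the hypotheses of \cref{lem:realComputePrimitive} --- that $[s,t]\neq\partial P$ (via the standing assumption and \cref{obs:trivialOneCenter}) and that $g\neq s,t$ (diverting the $O(n)$ arrangement vertices coinciding with polygon vertices, and other degeneracies, to an $O(n)$-time visibility-polygon fallback) --- points the paper's proof leaves implicit under a general-position assumption. The paper's walk-plus-bitstring formulation is marginally more uniform in that it treats all four anchors of a vertex in one visit, whereas you split them across the two sweeps through that vertex, but this has no effect on correctness or on the $O(n^2\log n)$ time and $O(n^2)$ space bounds.
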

\begin{proof}
    First, we construct the data structures from Lemma~\ref{lem:shortestPath} and \ref{lem:rayShooting}.  
    For all supporting lines of edges in $P$, we compute their arrangement $C_P$ inside of $P$. 
    If the vertices of $P$ lie in general position, each vertex on this arrangement lies on exactly two supporting lines. 
    We define for each vertex $c$ of this arrangement a bit string $\{0,1\}^n$ where the $i$\ts{th} bit is a $1$ if and only if  $c$ lies to the left of the supporting line of the $i$\ts{th} edge of $P$.  
    We store this bit string in a balanced binary tree where the $i$\ts{th} leaf stores the corresponding bit and each inner node stores whether its subtree contains a leaf with at least one $0$-bit.
    
    By virtue of the arrangement being defined by all lines supporting edges of $P$, this bit string differs by at most two entries for neighboring vertices in the arrangement.   
    We now compute an arbitrary walk of length $O(n^2)$ through the planar graph which passes every vertex of $C_P$. If we walk from a vertex $g$ to a vertex $g'$ in this arrangement, we leave exactly one supporting line $\ell$ and arrive on exactly one other supporting line $\ell'$.
    By comparing $g'$ to both these lines, we can update the bit string to correspond to the bit string of $g'$. 
    We update its tree representation in $O(\log n)$ time. 
    
    We traverse the walk of the arrangement, maintaining the bit string and its tree representation. Whenever we arrive at some vertex $g$ (defined by the lines supporting the $i$\ts{th} edge $e_i$ and $j$\ts{th} edge $e_j$ of $P$), we compute the at most four contiguous guards that $g$ contributes to $\snapGuards$. 
    In particular, for any endpoint $x$ of a defining edge of $g$ we test in $O(\log n)$ time whether $g$ can see $x$ by testing if the shortest path data structure returns a single edge between $g$ and $x$.    
    If so, we traverse our tree representation of the bit-string corresponding to $g$  in $O(\log n)$ time to identify the set of edges $E_{g, x}$ from Definition~\ref{def:interval}. 
    We apply Lemma~\ref{lem:realComputePrimitive} to then compute in $O(\log n)$ time the maximal contiguous chain $[u, v]$ of $E_{g, x}$ that 
    is visible to $g$.
    By Observation~\ref{obs:leftOfEdges} this is the unique maximal chain containing $x$ that is visible to $g$, and we add $(g, [u, v])$ to $\snapGuards$. 
    Our space usage is dominated by storing $\snapGuards$.
\end{proof}

\begin{corollary}\label{cor:computeStart}
    We can compute the set $\candidateStarts$ from Definition~\ref{def:V^*} in $O(n^2 \log n)$ time using $O(n^2)$ space. 
\end{corollary}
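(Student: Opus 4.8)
The plan is to observe that $\candidateStarts$ is obtained from $\snapGuards$ by a trivial post-processing step, so the bound follows almost immediately from \cref{lem:computeAllVisibility}. First I would invoke \cref{lem:computeAllVisibility} to compute $\snapGuards$ in $O(n^2\log n)$ time and $O(n^2)$ space; recall that by \cref{def:guard_set} the set $\snapGuards$ contains at most four guards per vertex of $C_P$, and $C_P$ has $O(n^2)$ vertices, so $|\snapGuards| = O(n^2)$.

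Next I would build $\candidateStarts$ directly from \cref{def:V^*}: iterate once over $\snapGuards$ and, for each stored guard $(c, [u_x, v_x])$, output the point $u_x$; then additionally output the $n$ vertices of $P$. Each of the $O(n^2)$ guards is processed in $O(1)$ time, and adding the vertices of $P$ costs $O(n)$ time, so the post-processing is $O(n^2)$ time and the resulting set has $O(n^2)$ points, fitting in $O(n^2)$ space. Summed with the cost of \cref{lem:computeAllVisibility}, the total is $O(n^2\log n)$ time and $O(n^2)$ space, as claimed.

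Since the statement is a direct corollary of an already-established lemma, I do not anticipate a genuine obstacle; the only point to be slightly careful about is that $\candidateStarts$ as defined keeps only the \emph{left} endpoints $u_x$ of the maximal chains (together with all polygon vertices), so the size stays $O(n^2)$ and nothing about the right endpoints $v_x$ or the guard points $c$ needs to be retained for this particular set. If one also wants to keep, for each $u \in \candidateStarts$, a pointer to the corresponding entry of $\snapGuards$ (useful later for the greedy subroutine), this can be recorded during the same single pass without affecting the asymptotic bounds.
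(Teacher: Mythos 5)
Your proposal is correct and matches the paper's intent exactly: the paper states this as an immediate corollary of \cref{lem:computeAllVisibility} (it gives no separate proof), and your argument simply spells out the obvious extraction of the points $u_x$ from $\snapGuards$ plus the $n$ vertices of $P$, which is the intended derivation.
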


\subsection{Computing a greedy maximal guard, given a $u\in\partial P$. }

We show that we can leverage our shortest-path data structure together with $\snapGuards$ to compute a greedy guard in polylogarithmic time.

\begin{lemma}\label{lem:realCompute}
    We can store a simple polygon $P$ with $n$ vertices in a data structure in $O(n^2\log n)$ time and $O(n^2)$ space, such that for any query point $u\in \partial P$ one can compute the maximal $v\in\partial P$ and point $g$ such that $(g,[u,v])$ is a contiguous guard in $O(\log^2 n)$ time.
\end{lemma}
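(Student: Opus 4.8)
The plan is to build, on top of the data structures of \cref{lem:shortestPath,lem:rayShooting} and the set $\snapGuards$ from \cref{lem:computeAllVisibility}, an auxiliary structure that, given a query point $u$, produces two candidate contiguous guards $(c_1,[u,v_1])$ and $(g_2,[u,v_2])$ and returns whichever reaches farther along $\partial P$. The first will be optimal whenever the maximal contiguous guard starting at $u$ is good, the second whenever it is bad; since every guard we output is a genuine contiguous guard it can never overshoot the true maximum $v^*$, so returning $\max(v_1,v_2)$ is correct. Observe first that $\{v \in \partial P : \exists g,\ (g,[u,v]) \text{ is a contiguous guard}\}$ is a counterclockwise prefix of $\partial P$ starting at $u$ (if $g$ sees $[u,v]$ it sees every $[u,v']$ with $v'\le v$), so $v^*$ is well defined.

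For the good case, note that if the maximal guard $(g^*,[u,v^*])$ is good, then by \cref{cor:good_is_good} (and \cref{lem:first} if $[u,v^*]$ carries no vertex of $P$) there is a guard $(c,[u_x,v_x]) \in \snapGuards$ with $[u,v^*] \subseteq [u_x,v_x]$; conversely, for any $(c,[u_x,v_x]) \in \snapGuards$ with $u \in [u_x,v_x]$ the pair $(c,[u,v_x])$ is a contiguous guard, so $v_x \le v^*$. Hence $v^*$ equals the counterclockwise-farthest endpoint over all candidate chains of $\snapGuards$ containing $u$. I precompute this: sort the $O(n^2)$ chain endpoints along $\partial P$ into $O(n^2)$ arcs, and for each arc store the candidate chain that contains it whose endpoint is farthest counterclockwise from the arc. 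This quantity is well defined on an entire arc because for two chains both containing the arc, moving the viewpoint within the arc shifts both counterclockwise-distances by the same amount and so preserves their order; a sweep around $\partial P$ maintaining the active chains in priority queues computes all of these in $O(n^2\log n)$ time and $O(n^2)$ space. A query then point-locates $u$ among the arcs in $O(\log n)$ and reads off $(c_1,[u,v_1])$.

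For the bad case, suppose $(g^*,[u,v^*])$ is bad. By \cref{obs:straight_is_convex} (condition~\cref{itm:third}) the shortest path $\pi(u,v^*)$ is a convex left-turning chain with at least two edges (condition~\cref{itm:fourth}), and by \cref{lem:verticesAreGood} either $v^*$ is a reflex vertex of $P$ or $\overline{v^*\,g^*}$ contains one in its interior. I binary search for the edge $e=\overline{v_j\,v_{j+1}}$ of $P$ containing $v^*$, traversing the edges counterclockwise from $u$; by the monotonicity above the predicate ``some contiguous guard reaches into or past $e$'' is monotone, so the search is valid. For a candidate edge $e$, the set $E$ of edges meeting the open chain from $u$ to a point of $e$ is determined by $u$ and $e$ alone, and from a precomputed segment tree of edge–half-plane intersections I recover enough of the convex region $P_E$ to apply \cref{lem:AlmostSnapBadGuards}: together with the supporting line $\ell_u$ of the last edge of the shortest path from $v_{j+1}$ to $u$ (one shortest-path query), the lemma confines any guard of a chain ending in $e$ to a vertex of $P_E$ or an extreme point of $\ell_u \cap P_E$, so one tests existence, and thus decides the search step, inside the budget. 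Once the correct edge $e^*$ is found, I recover the guard, fire the ray from it through the last reflex vertex of $\pi(u,v^*)$, ray-shoot against $e^*$ (\cref{lem:rayShooting}), and obtain $v^*$ together with the second candidate $(g_2,[u,v_2])$.

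The main obstacle is exactly this bad case, and specifically staying within the time bound: a contiguous guard's witness, its chain, and the relevant visibility core can each have $\Theta(n)$ complexity, so the proof must never construct a visibility polygon or a visibility core from scratch during a query. The plan is therefore to precompute the visibility cores (equivalently, the edge–half-plane-intersection structure) of the canonical counterclockwise edge ranges, and to use \cref{lem:AlmostSnapBadGuards} and \cref{lem:verticesAreGood} to argue that only $O(\log n)$ of these cores, plus $O(\log n)$ ray-shooting and shortest-path queries, are consulted per query, giving $O(\log^2 n)$ total; the delicate technical points will be making each per-step core retrieval logarithmic (e.g.\ via fractional cascading over the segment tree of convex intersections) and verifying that the point produced by \cref{lem:AlmostSnapBadGuards} genuinely sees the whole chain $[u,v^*]$, and not merely the polygon bounded by that chain and $\pi(u,v^*)$. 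Preprocessing is dominated by \cref{lem:computeAllVisibility} and these cores, $O(n^2\log n)$ time and $O(n^2)$ space.
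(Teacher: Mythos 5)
Your good-guard candidate is essentially the paper's construction (a point-location structure over the chains of \snapGuards returning the chain through $u$ with farthest endpoint), and your overall two-candidate architecture and binary search over edges also mirror the paper. The genuine gap is in the decision step of the bad-case binary search. You propose to decide ``some contiguous guard reaches into or past $e$'' by invoking \cref{lem:AlmostSnapBadGuards}, which confines a witness to \emph{either} a vertex of $P_E$ \emph{or} an extreme point of $\ell_u\cap P_E$, and then ``testing existence.'' But $P_E$ can have $\Theta(n)$ vertices, and testing whether any of them sees the relevant chain is not a polylogarithmic operation; you give no mechanism for it, so the per-step budget is not met. The paper's key (and missing, in your write-up) observation is that the vertex-of-$P_E$ branch never needs to be examined at query time: for a bad guard the open chain $(u,v)$ contains a vertex of $P$, hence every edge crossing $(u,v)$ has an endpoint inside $(u,v)$, so a vertex of $P_E$ that sees $[u,v]$ is a vertex of $C_P$ whose maximal chain through such an endpoint is already stored in \snapGuards (\cref{def:guard_set}); that case is therefore dominated by your first candidate $v_1$. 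Consequently the binary search only ever has to evaluate the \emph{two} extreme points of $\ell_u$ in $P_E$ and extend them via \cref{lem:realComputePrimitive}, which is what makes each step $O(\log n)$. Without this reduction your predicate is not decidable within the claimed time, and the monotone-search framing does not rescue it.

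A secondary quantitative issue: the paper retrieves $P_E$ as the intersection of exactly two precomputed power-of-two visibility cores (sparse-table style), giving $O(1)$ convex pieces and hence $O(\log n)$ work per search step. Your segment-tree decomposition yields $O(\log n)$ canonical cores per step; intersecting $\ell_u$ with all of them and doing the membership tests is $O(\log^2 n)$ per step and $O(\log^3 n)$ overall unless you switch to the two-range covering (or make the fractional-cascading claim precise), so as stated the $O(\log^2 n)$ query bound is not established. Your final worry about visibility in the subpolygon versus in $P$ is unfounded: the witness segments lie in a subpolygon of $P$, so visibility transfers, exactly as \cref{lem:AlmostSnapBadGuards} asserts.
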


\begin{proof}
    We preprocess $P$ through Lemma~\ref{lem:shortestPath} and \ref{lem:rayShooting}.
    We store for every $i$ and every $j\leq\lfloor\log n\rfloor$ the visibility core defined by the $i$\ts{th} to $(i+2^j)$\ts{th} edge of $P$. This requires $O(n^2)$ space and $O(n^2\log n)$ time.
    Finally, we invoke Lemma~\ref{lem:computeAllVisibility} to compute $\snapGuards$. 
    The set $\snapGuards$ induces a collection of $O(n^2)$ chains that we sort by their start vertices along $\partial P$. 
    We linearly scan these chains, removing any chains that are contained by any of its predecessors. Let $I$ be the resulting sorted set of chains. Then for given $u$ we can determine the chain $[u^-,v]\in I$, such that $u\in [u^-,v]$ and $v$ is maximal in $O(\log n)$ time.
    All these precomputation steps require $O(n^2 \log n)$ time and space.

    Given a query point $u$, we first query our chain data structure in $O(\log n)$ time to find the guard $(c, [u', v']) \in \snapGuards$ where $u \in [u', v']$ and $v'$ is maximal. We remember $v'$ and search for a better alternative.

    We binary search for the maximal edge $e$ along $\partial P$ that contains a point $v\geq v'$ where there exists a contiguous bad guard $(g, [u, v])$. 
    Observe that for a bad guard $(g, [u, v])$, the shortest path from $u$ to $v$ must be convex and only left turning.  
    Firstly, we use our shortest path data structure to compute the shortest path $\pi$ from the maximal endpoint of $e$ to $u$. 
    The data structure from~\cite{Guibas1987Optimal} can be easily adapted to test whether the output path is entirely left turning in $O(\log n)$ time (e.g.,~\cite[Observation 1]{Eades2020Visibility}). 
    
    If $\pi$ is not, then neither is the shortest path for any $v' \in e$. 
    Then, by Observation~\ref{obs:straight_is_convex}, there exists no guard that can see both $u$ and a point $v' \in e$.
    Since there is no contiguous guard that can guard $[u, v']$ for $v' \in e$, there also exists no contiguous guard that can guard from $u$ to a point on an edge succeeding $e$ and so our binary search continues by setting $e$ to an edge that is less far along $\partial P$. 
    
    If $\pi$ is convex then we compute the supporting line $\ell_u$ of the last edge of $\pi$ in $O(\log n)$ time using our shortest path data structure.
    Let $u$ lie on an edge $e_i$ and denote by $E$ the chain edges from $e_i$ to~$e$. 
    Recall that for edges $e'$ of $P$, we stored for all contiguous chains $E'$ along $\partial P$ that start or end at $e'$ whose length is a power of two its visibility core. 
    It follows that we can get two visibility cores $P_{E_1}$ and $P_{E_2}$ such that $E = E_1 \cap E_2$ in logarithmic time. 
    Since $E_1$ and $E_2$ are convex, we find in logarithmic time the intersection points between $\ell_u$ and $P_{E_1}$, and $\ell_u$ and $P_{E_2}$. 
    For these two points, we do a membership query for $P_{E_1}$ and $P_{E_2}$ to identify the first and last points along $\ell_u$ in $P_E$ (if such a point exists).

    For such a point $g$, we apply Lemma~\ref{lem:realComputePrimitive} to determine in $O(\log n)$ time the guard $(g, [u^*, v^*])$ that contains an endpoint of $e$ such that $[u^*, v^*] \subseteq E$ is maximal. Observe, that $u\in[u^*,v^*]$. 
    If $v^*$ is the maximum point seen so far, we record $(g, [u, v^*])$ and we continue the binary search by setting $e$ to an edge that is further along $\partial P$. 
    Otherwise, we we set $e$ to an edge that is less far along $\partial P$. 

    The binary search procedure terminates after $O(\log n)$ rounds, each taking $O(\log n)$ time. 
    We compare $v'$ to $v^*$ and output either $(c, [u', v'])$ or $(g, [u^*, v^*])$ depending on whether $v'$ or $v^*$ is larger. 

    To prove correctness, suppose $(g,[u'' ,v''])$ is the contiguous guard where  $u \in [u'', v'']$ that maximizes~$v''$. 
    If $(g,[u'' ,v''])$ is a good guard, then by Corollary~\ref{cor:good_is_good}, a point from $C_P$ defined by two edges intersecting $(u,v)$ also sees $[u,v]$. But, then there is an interval $[u^-,v^+]\in I$ such that $[u,v]\subseteq[u^-,v^+]$, and in particular, $v' \geq v$.
    If instead $(g,[u'',v''])$ is bad, then there are two cases by Lemma~\ref{lem:AlmostSnapBadGuards}.
    For case one, a point from $C_P$ defined by two edges intersecting $(u,v)$ also sees $[u,v]$. In this case, we may again conclude that $v' \geq v$. 
    For case two, the computed guard $(g', [u^*, v^*])$ 
    sees $[u,v]$ by Lemma~\ref{lem:AlmostSnapBadGuards}. By Lemma~\ref{lem:computeVisibility}, $[u^*, v^*]$ is the unique maximal visible chain containing $u$ from $g'$. 
\end{proof}

\section{An $O(k n^2\log^2n)$-time algorithm.}\label{sec:allTogether}

Our main result is, in spirit, an immediate consequence of Theorem~\ref{thm:good_seed} and Lemma~\ref{lem:realCompute}.

\begin{theorem}
    There is an algorithm which, provided with a simple polygon $P$ with complexity $n$, computes a set of $k$ contiguous guards $G$ such that $G$ guards the entirety of $\partial P$ in time $O(kn^2\log^2 n)$ using $O(n^2)$ space, where $k$ is the smallest number of contiguous guards necessary to guard all of $\partial P$.
\end{theorem}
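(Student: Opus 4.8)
The plan is to combine the two workhorses developed earlier: the discretization result (Theorem~\ref{thm:good_seed}), which tells us that at least one optimal solution ``anchors'' a maximal guard at a point of $\candidateStarts$, and the data structure of Lemma~\ref{lem:realCompute}, which answers greedy-guard queries in $O(\log^2 n)$ time. First I would handle the trivial case: by Observation~\ref{obs:trivialOneCenter} we check in linear time whether a single contiguous guard suffices, and if so we are done; otherwise we know $k \geq 2$ and Theorem~\ref{thm:good_seed} applies. Then I would run, once, the preprocessing of Lemma~\ref{lem:realCompute} (which subsumes Corollary~\ref{cor:computeStart}, so $\candidateStarts$ is available), in $O(n^2\log n)$ time and $O(n^2)$ space.

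The main loop iterates over every $u \in \candidateStarts$; since $|\candidateStarts| = O(n^2)$ there are $O(n^2)$ iterations. For each such $u$ we run the greedy procedure: set $u_0 = u$, repeatedly query the data structure to obtain the maximal $v_i$ (and a witness guard $g_i$) with $(g_i, [u_i, v_i])$ a contiguous guard, set $u_{i+1} = v_i$, and stop once the chains $\{[u_i, v_i]\}$ together cover $\partial P$, i.e. once $u_{i+1}$ has ``wrapped around'' past $u_0$. Each query costs $O(\log^2 n)$, so an invocation that produces $t$ guards costs $O(t \log^2 n)$. I would then argue the standard greedy bound: because each step picks the inclusion-wise maximal reachable chain, a greedy run starting anywhere uses at most $k+1$ guards (a guard $(g,[u,v])$ of any fixed optimal solution whose chain contains $u_i$ certifies that $v_i$ is at least as far as that guard's right endpoint, so greedy ``keeps up'' with the optimal cover shifted to start at $u$). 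Hence every invocation halts within $k+1 = O(k)$ steps, costing $O(k\log^2 n)$, and the whole loop costs $O(n^2 \cdot k \log^2 n) = O(k n^2 \log^2 n)$; to keep the bound clean I would also abort any run as soon as it exceeds $k+1$ guards (we can track the best-so-far and compare, or simply remember the minimum count found). Across all $O(n^2)$ runs we keep the smallest guard set produced.

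Correctness: every set returned by a greedy run is by construction a valid cover of $\partial P$ by contiguous guards, so the output size is an upper bound on nothing less than $k$ is returned only if... — more precisely, the minimum over all runs is $\geq k$ trivially (no cover is smaller than the optimum) and $\leq k$ because of the following. By Theorem~\ref{thm:good_seed}, there is an optimal $G'$ with $|G'| = k$ and some $(g,[u,v]) \in G'$ having $u \in \candidateStarts$. Consider the greedy run started at this $u$. Relabel the guards of $G'$ cyclically as $h_1, \dots, h_k$ in the counterclockwise order in which their chains appear starting from $u$, with $h_1$'s chain beginning at $u$. A straightforward induction shows that after $i$ greedy steps the point $u_{i} = v_{i-1}$ is no earlier (with respect to the order $\leq$ rooted at $u$) than the right endpoint of $h_i$'s chain: the base case is $h_1$ itself, which is a candidate guard starting at $u_0 = u$ so $v_0$ reaches at least the end of $h_1$; for the inductive step, $u_i$ lies inside the chain of $h_{i+1}$ (since $G'$ covers $\partial P$ and $u_i$ is past the end of $h_i$), so $(h_{i+1}\text{'s point}, [u_i, \text{end of }h_{i+1}])$ is a valid contiguous guard, whence the maximal $v_i$ reaches at least that far. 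After $k$ steps the greedy cover reaches the end of $h_k$'s chain, which is (at least) $u$ again, so the run terminates with at most $k$ guards; combined with the generic $\geq k$ bound it returns exactly $k$.

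The main obstacle I expect is the bookkeeping of the greedy termination and the cyclic ``wrap-around'' comparison, since the boundary is a cycle and the relation $\leq$ in the preliminaries is only defined relative to a fixed basepoint; one must be careful that ``the chains cover $\partial P$'' is detected correctly and that a greedy step never overshoots in a way that double-counts. This is routine but needs care. A secondary point is justifying that we may cap each run at $k+1$ steps without knowing $k$ in advance — this is handled by running all $O(n^2)$ seeds, keeping a running minimum $m$ of completed-run sizes, and aborting any run that reaches $m$ guards without finishing; since the correct seed completes in $k$ steps, $m$ converges to $k$ and the total work is still $O(k n^2 \log^2 n)$. Everything else — the per-query cost, the preprocessing cost, and the space bound — is inherited directly from Lemma~\ref{lem:realCompute} and Corollary~\ref{cor:computeStart}.
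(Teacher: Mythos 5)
Your proposal is correct and follows essentially the same route as the paper: handle $k=1$ via Observation~\ref{obs:trivialOneCenter}, preprocess once with Lemma~\ref{lem:realCompute}, run the greedy procedure from every point of $\candidateStarts$, and use the standard greedy-versus-optimal exchange argument (the paper phrases it as a minimal-index contradiction, you as a ``greedy keeps up'' induction, which is the same argument) to show every run has size at most $k+1$ and the run seeded by Theorem~\ref{thm:good_seed} has size exactly $k$. The extra bookkeeping you propose (aborting runs at the running minimum) is harmless but unnecessary, since the uniform $k+1$ bound already yields the $O(kn^2\log^2 n)$ total.
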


\begin{proof}
    Let $\Gopt$ be an optimal solution to the \contAG problem and let $|\Gopt| = k$. If $k = 1$ then, by Observation~\ref{obs:trivialOneCenter}, the visibility core of $P$ is non-empty, which can be checked in $O(n\log n)$ time.
    Otherwise, we may assume by Theorem~\ref{thm:good_seed} without loss of generality that $\Gopt$ contains a contiguous guard $(g, [u, v])$ with $u \in \candidateStarts$ (see also Definition~\ref{def:V^*}). 

    In $O(n^2 \log n)$ time, we compute $\candidateStarts$ and the data structure of Lemma~\ref{lem:realCompute}. 
    For all $u \in \candidateStarts$, we compute a guarding solution $G_u$ by a greedy algorithm that iteratively queries for $u$ the farthest  point $v$ along $\partial P$ such that there exists a contiguous guard $(g, [u, v])$. 
    We add $(g, [u, v])$ to $G_u$ and recurse on $v$ until we find a set that guards $\partial P$. 
    By Lemma~\ref{lem:realCompute}, this procedure takes $O( |G_u| \log^2 n)$ time. 
    We output the set $G_u$ for which $|G_u|$ is minimal.

    What remains is to show that for all $u \in \candidateStarts$, $|G_u|=O(k)$, and for at least one $u \in \candidateStarts$, $|G_u| = k$. 
    If for a vertex $u \in \candidateStarts$, the optimal solution contains a contiguous guard $(g, [u, v])$ then we suppose for the sake of contradiction that $|G_u| > |\Gopt|$. 
    We order the guards in $G_u$ and $\Gopt$ by their chains along $\partial P$ from $u$.
    Since $|G_u| > |\Gopt|$, there must exist a minimum index $i$ such that for the $i$\ts{th} guards $(g_i, [u_i, v_i])\in G_u$ and $(g_i^*, [u_i^*, v_i^*])\in \Gopt$, it holds that $v_i < v_i^*$. 
    However, since $i$ is minimum, it must be that $v_{i-1} \geq u_i^*$ (if $i = 1$ then we set $v_{i-1}$ to $u$). 
    But the greedy algorithm would then have selected a contiguous guard $(g_i', [u_i', v_i'])$ with $v_i^* \leq v_i'$---a contradiction. 
    The same argument shows that for any $u\in\partial P$, and, in particular, $u\in \candidateStarts$, it holds that $|G_u| \leq k + 1$ by ordering $G_u$ and $\Gopt$ by their chains along $\partial P$, starting from the first interval that contains  $u$, but counting the guards in $\Gopt$ from $0$ instead of $1$. 

    Since preprocessing takes $O(n^2 \log n)$ time and $O(n^2)$ space, and the subsequent strategy invokes the greedy algorithm $O(n^2)$ times, and each invocation taking $O(k \log^2 n)$ time, the theorem follows. 
\end{proof}
\bibliography{bibliography}

\appendix






\end{document}